\journalname{}
\begin{document}

\title{Prohorov Metric-Based Nonparametric Estimation of the Distribution of Random Parameters in Abstract Parabolic Systems with Application to the Transdermal Transport of Alcohol
}

\titlerunning{Prohorov Metric-Based Nonparametric Estimation of Random Parameters}        

\author{Lernik Asserian$^1$         \and
        Susan E. Luczak$^{2,4}$ \and I.G. Rosen$^{3,4}$ 
}

\authorrunning{Lernik Asserian et al.} 

\institute{L. Asserian \\
              \email{lernik@stanford.edu}            \\
              Susan E. Luczak \\
              \email{luczak@usc.edu} 
              \\
              I.G. Rosen \\
              \email{grosen@usc.edu} \\
            $^1$ Department of Mathematics, Stanford University, Stanford, CA, USA
            \\
            $^2$ Department of Psychology, University of Southern California, Los Angeles, CA, USA \\
            $^3$ Modeling and Simulation Laboratory, Department of Mathematics, University of Southern California, Los Angeles, CA, USA
\\
              $^4$ This study was funded in part by the National Institute on Alcohol Abuse and Alcoholism under Grant Numbers: R21AA017711 and R01AA026368, S.E.L. and I.G.R.
         \\ }


\maketitle

\begin{abstract}
We consider a Prohorov metric-based nonparametric approach to estimating the probability distribution of a random parameter vector in discrete-time abstract parabolic systems. We establish the existence and consistency of a least squares estimator. We develop a finite-dimensional approximation and convergence theory, and obtain numerical results by applying the nonparametric estimation approach and the finite-dimensional approximation framework to a problem involving an alcohol biosensor, wherein we estimate the probability distribution of random parameters in a parabolic PDE. To show the convergence of the estimated distribution to the ``true" distribution, we simulate data from the ``true" distribution, apply our algorithm, and obtain the estimated cumulative distribution function. We then use the Markov Chain Monte Carlo Metropolis Algorithm to generate random samples from the estimated distribution, and perform a generalized (2-dimensional) two-sample Kolmogorov-Smirnov test with null hypothesis that our generated random samples from the estimated distribution and generated random samples from the ``true" distribution are drawn from the same distribution. We then apply our algorithm to actual human subject data from the alcohol biosensor and observe the behavior of the normalized root-mean-square error (NRMSE) using leave-one-out cross-validation (LOOCV) under different model complexities.
\keywords{Nonparametric estimation \and Prohorov metric \and Existence and consistency of estimators \and Least squares estimation \and Random discrete time dynamical systems \and Random partial differential equations \and Finite dimensional approximation and convergence \and Alcohol biosensor \and Transdermal alcohol concentration \and Breath alcohol concentration}
\end{abstract}

\section{Introduction}
\label{intro}

In the absence of a blood sample, breath alcohol concentration (BrAC) obtained by a breathalyzer is typically used as a surrogate for blood alcohol concentration (BAC) for measuring the alcohol level in the human body. This is the case in law enforcement, medical research, and clinical therapy. The breathalyzer was developed by Borkenstein based on a redox (i.e. oxidation and reduction) reaction and Henry's law \cite{Labianca:1990}. While not entirely without controversy, there is reasonable agreement between BrAC and BAC, and this relationship has been shown to be relatively consistent across individuals under varying conditions \cite{Labianca:1990}. Even though there is widespread acceptance among researchers and clinicians of using BrAC as a proxy for BAC, collecting and using samples of BrAC can be both challenging and limiting. For example, it is impractical to collect continuous or near-continuous BrAC data. Also, blowing into the breathalyzer so that a deep lung sample is obtained can be difficult, and if still drinking, the blown sample may be contaminated by mouth alcohol. Alternatively, ethanol, the type of alcohol contained in alcoholic beverages, is highly miscible and finds its way into all the water in the body. Moreover, it has been known for a long time that the alcohol contained in perspiration is positively correlated with BAC (and BrAC) \cite{Swift:2000}, but unfortunately, due to a number of confounding factors, the actual functional relationship between alcohol in perspiration and BAC/BrAC is not entirely clear. The technology required to measure the alcohol level in perspiration has been evolving and there are now biosensors available that measure transdermal alcohol concentration (TAC). For the most part, TAC sensors also rely on a redox reaction in the form of a fuel cell that produces 4 electrons for each molecule of ethanol \cite{Marques:2009}. The result is an electrical current that can be measured and converted into TAC. However, for the device to be used by researchers and clinicians as a meaningful quantitative indicator of alcohol in the blood, a means of converting TAC into BrAC or BAC is required.

TAC is the amount of alcohol that diffuses from the dermal layer through the epidermal layer. The dermal layer of the skin is the layer that has active blood supply, and the epidermal layer of the skin is the layer that does not contain blood. After consuming alcohol, the liver metabolizes most of the alcohol, but some of the alcohol leaves through exhaled breath, some through urine, and about $1\%$ diffuses through the skin in the form of perspiration and sweat \cite{Sakai:2006}. A number of biosensors have been developed over the past few decades for measuring TAC -- among them are the WrisTAS$^{\text{TM}}$7 (Figure \ref{Fig.1}, left) manufactured by Giner, Inc. in Waltham, MA and the SCRAM CAM$^{\textregistered}$ (Secure Continuous Remote Alcohol Monitor) (Figure \ref{Fig.1}, right) by Alcohol Monitoring Systems, Inc. (AMS) in Littleton, CO.

\begin{figure}[H]
\centering
\includegraphics[width=3.9cm ,height= 3.2cm]{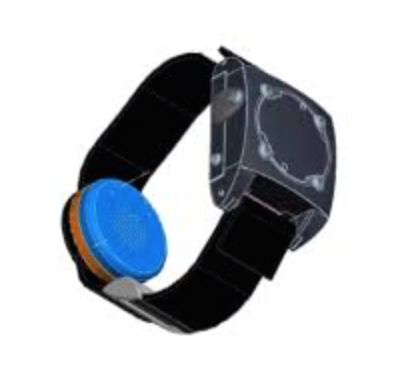}
\hspace{0.5 in}
\includegraphics[width=4.2cm ,height= 2.6cm]{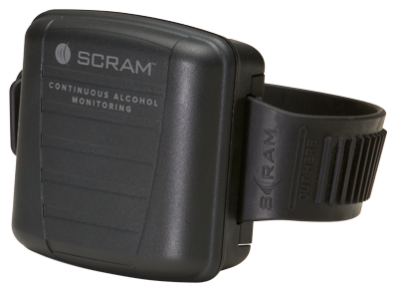}
\caption{WrisTAS$^{\text{TM}}$7 (left) and SCRAM CAM$^{\textregistered}$ (right) alcohol biosensors}
\label{Fig.1}
\end{figure}

Furthermore, there are variations in TAC 1) between sensors including the same brand of sensors, 2) between persons due to thickness of the skin, porosity (the quality of being porous, i.e. having tiny holes that fluid flows through), tortuosity (i.e. the diffusion and fluid flow through turns and twists in a porous media), and diffusion rate through the skin, and 3) within person across drinking episodes due to environmental conditions, temperature and humidity, and skin hydration and vasodilation (i.e. the dilation of blood vessels which can result in a lower blood pressure). Taken together, these variations create significant challenges in obtaining a direct method of converting TAC to BrAC/BAC.

In previous research, we used deterministic models to address these types of issues \cite{Banks:1997,Banks:1989}. For the forward model, where the input is BrAC and the output is TAC, in \cite{Dai:2016,Rosen:2014}, we developed a deterministic approach wherein a model based on a one-dimensional diffusion equation using Fick's law of diffusion \cite{Smith:2004} was calibrated to breath analyzer BrAC data and TAC data for a single drinking episode for each individual subject. The resulting fit of the model was then used to estimate BrAC for other drinking episodes by that individual wearing the device when only TAC (and not BrAC) is obtained. However, the calibration process, which requires the simultaneous collection of BrAC and TAC for at least one drinking episode, is difficult to conduct, time consuming, and often impractical. In \cite{Sirlanci:2019(1),Sirlanci:2017,Sirlanci:2019(2)}, we eliminated the calibration process to make the process of estimating BrAC from TAC simpler for researchers and clinicians. In this series of papers, we allowed the parameters in the diffusion equation to be random and then, rather than fitting the parameters themselves, we fit the distribution of the parameters to population data and thus obtained a population model. This model could then be used to obtain an estimate of BrAC from TAC along with confidence bands for an individual subject without first requiring individualized calibration.

As in \cite{Sirlanci:2019(1),Sirlanci:2017,Sirlanci:2019(2)}, we consider a physics-based model for the transdermal transport of ethanol in the form of a random parabolic partial differential equation. The parameters of this model are considered to be random due to a number of different sources of uncertainty. The first task is to estimate the distribution of these random parameters, which we consider here. 

Converting to dimensionless variables, the following initial-boundary value problem is obtained.
\begin{align}
    \frac{\partial x}{\partial t}(t,\eta) &= q_1 \frac{\partial^2 x}{\partial\eta^2}(t,\eta), \quad 0<\eta<1, \quad t>0, \label{eq.1}\\
    q_1\frac{\partial x}{\partial \eta}(t,0) &= x(t,0), \quad t>0, \label{eq.2}\\
    q_1\frac{\partial x}{\partial \eta}(t,1) &= q_2 u(t), \quad t>0, \label{eq.3}\\
    x(0,\eta) &= x_0, \quad 0<\eta<1, \label{eq.4}\\
    y(t) &= x(t,0), \quad t>0, \label{eq.5}
\end{align}
where $t$ is the temporal variable and $\eta$ is the spatial variable, $x(t,\eta)$ is the concentration of ethanol in the epidermal layer of the skin at time $t$ and depth $\eta$, and $\eta=0$ at the skin surface and $\eta=1$ at the boundary between the epidermal and dermal layers of the skin. We denote BrAC/BAC at time $t$ by $u(t)$, and TAC at time $t$ by $y(t)$. Equation (\ref{eq.1}) describes the transport of ethanol through the epidermal layer of the skin with boundary conditions (\ref{eq.2}) and (\ref{eq.3}) representing the evaporation of ethanol at the skin surface and flux of ethanol across the boundary between the epidermal and dermal layers of the skin, respectively. We assume that the initial condition (\ref{eq.4}) is $x_0(\eta)=0$, $0<\eta<1$, representing the assumption that at time $t=0$ there is no alcohol in the epidermal layer of the skin. Finally, the output equation (\ref{eq.5}) is the TAC level at the skin surface measured by the biosensor. The random parameters in these equations are $q_1$, which represents the normalized diffusivity, and $q_2$, which represents the normalized flux gain at the boundary between the dermal and epidermal layers of the skin. These parameters depend on the specific sensor that is being used in the experiment, the environmental conditions, and the individual wearing the sensor. The parameter vector is $\bm{q} = (q_1,q_2) \in Q$, where $Q$ is a compact subset of $\mathbb{R}^+ \times \mathbb{R}^+$ with metric $d_Q$.

The fundamental idea in both our earlier work, \cite{Sirlanci:2019(1),Sirlanci:2017,Sirlanci:2019(2)}, and our treatment here is to account for the uncertainty in the model parameters by fitting a population model to observations of both BrAC and TAC for a particular sample of subjects.  Fitting the population model consists of estimating the distribution of the random parameters.  Once this is done, the resulting population model can then be used to estimate BrAC from a subject's TAC when no observations of BrAC are available.  The underlying statistical model used to estimate the distribution of the parameters is based on the assumption that the dynamics in this physics-based model is common to the entire sample (i.e. all individuals who wore different sensors under different environmental conditions and all molecules of ethanol that were transported from the blood through the skin to the sensor) and then rely on randomness and uncertainty in the model parameters to capture the un-modeled variation. Thus, we assume that each data point is an observation that consists of the mean behavior plus a random error. 

In \cite{Sirlanci:2018}, abstract discrete-time random parabolic systems with unbounded input and output operators (of which (\ref{eq.1})-(\ref{eq.5}) is a particular example) are considered.  Using the ideas developed in \cite{Gittelson:2014}, the systems of this general form are reformulated in weak form in a Gelfand triple setting wherein the corresponding Hilbert spaces are Bochner spaces with measures derived from the unknown distribution of the random parameters. In this way, the random parameters are treated much like additional spatial variables.  Finite dimensional approximation is achieved via Galerkin approximation with convergence arguments based on linear semigroup theory. If we let $N$ denote the level of discretization for the spatial variable $\eta$ and the random parameter vector $\bm{q}$, using linear splines and characteristic functions, respectively, we obtain matrix representations for the approximating linear operators, $\bm{A}^N, \bm{B}^N, \text{ and } \bm{C}^N$.  Then using the variation of parameters formula, the solution to the underlying approximating discrete-time dynamical system then takes the form of a convolution as 
\begin{align*}
    x^N_k &= \sum_{j=0}^{k-1} e^{\bm{A}^N (k-j-1)\tau} \bm{B}^N u_j, \quad k= 1,2, \dots,
\end{align*}
with the approximate output, or TAC, given by
\begin{align*}
    y^N_k &= \sum_{j=0}^{k-1} h_{k-j} u_j, \quad k=1,2, \dots.
\end{align*}
where the convolution kernel or filter $h_k = \bm{C}^Ne^{\bm{A}^N (k-1)\tau}\bm{B}^N$, $k=1,2, \dots$, and $\tau$ denotes the length of the sampling interval. 

In the sequence of papers \cite{Sirlanci:2019(1),Sirlanci:2019(2),Sirlanci:2018}, we made the assumption that the distribution of the random parameters is absolutely continuous with density parameterized by a parameter $\rho$ (e.g., a truncated bivariate Normal), estimated via least squares as $\rho^N$. We argued that there exists a subsequence $\{\rho^{N_j}\}$ that converges to the least squares estimate that would be obtained based on the original infinite-dimensional dynamical system. 

Our approach here is to not make any assumptions about the form of the distribution of the random parameters. We use a nonparametric estimation approach to estimate the shape of the cumulative distribution function directly rather than estimating parameters that determine a density, thus we are not limiting ourselves to any specific distribution type. We establish the existence and consistency of a least squares estimator and develop a finite-dimensional approximation and convergence theory. Our approach is an extension of the approach taken by Banks and his coauthors in \cite{Banks:2018} and \cite{Banks:2012}. Banks' results are for continuous-time dynamical systems with a single subject. We are interested in discrete-time dynamical systems with multiple subjects.

An outline of the remainder of the paper is as follows. In section \ref{sec:2}, we set up the mathematical model for our discrete-time dynamical systems and the associated statistical model. In section \ref{sec:3}, we establish the existence and consistency of the least squares estimator. In section \ref{sec:4}, we establish a convergence theory for finite-dimensional approximations to the estimator. In section \ref{sec:5}, we consider abstract parabolic systems in a Gelfand triple setting and the application of our abstract estimation and approximation framework to them. Finally, in section \ref{sec:6}, we present the application of our scheme to the transdermal transport of alcohol, first in the context of an example involving simulated data, and then in two examples involving human subject data collected in the laboratory of one of the co-authors (SEL) utilizing two different TAC biosensor devices. In the simulation case in section \ref{sec:6.1}, in order to show the convergence of the estimated distribution of the parameter vector $\bm{q}=(q_1,q_2)$ to the ``true" distribution, we take BrAC inputs from different drinking episodes, and simulate the TAC outputs using a specific distribution as the ``true" distribution. We estimate the distribution of the parameter vector using the algorithm from section \ref{sec:4}. Then, we generate samples from the estimated distribution using the Markov Chain Monte Carlo (MCMC) Metropolis algorithm. We perform a generalized (2-dimensional) two-sample Kolmogorov-Smirnov test (KS-test) in order to test the null hypothesis that our generated samples from the estimated distribution and the generated samples from the ``true" distribution are drawn from the same distribution. For the human subject data in section \ref{sec:6.2}, we take drinking episodes from nine individuals, and apply the leave-one-out cross-validation (LOOCV) method. Each time, we estimate the distribution of the parameter vector $\bm{q}$ using the training set. We simulate $100$ TACs using the BrAC input of the test set and $100$ samples of $\bm{q}$ generated from the estimated distribution. Using the average of the TACs as an estimation of the ``true" TAC, we observe the behavior of the normalized root-mean-square error (NRMSE) under different model complexities and consider their computational costs.

\section{The Mathematical Model}
\label{sec:2}

Analogously to what was done in \cite{Banks:2018,Banks:2012}, consider the following discrete-time mathematical model for the $i^{th}$ subject at time-step $k$
\begin{align*}
    x_{k,i} &= g_{k-1}(x_{k-1,i},u_{k-1,i};\bm{q}), \enskip k = 1,\dots,n_i, \enskip i = 1,\dots,m,\\
    x_{0,i} &= \phi_{0,i}, \enskip i = 1,\dots,m,
\end{align*}
where $\bm{q} \in Q$ is the parameter vector, $Q$ is the set of admissible parameters, $g_{k-1}: \mathcal{H} \times \mathbb{R}^{\nu} \times Q \to \mathcal{H}$, $\mathcal{H}$ is, in general, an infinite-dimensional Hilbert space, and $u_{k-1,i} \in \mathbb{R}^{\nu}$ is the input. The output is given by
\begin{align*}
    y_{k,i} = h_k(x_{k,i}, \phi_{0,i}, u_{k,i}; \bm{q}), \enskip k = 1,\dots,n_i, \enskip i = 1,\dots,m,
\end{align*}
where $h_k: \mathcal{H} \times \mathcal{H} \times \mathbb{R}^{\nu} \times Q \to \mathbb{R}$.

We consider the aggregate problem of observing $n_i \times m$ sampled measurements of the mean behavior of the output plus a random error. Define
\begin{align}
    Y_{k,i} = \bar{y}_{k,i}(P_0) + e_{k,i}, \enskip k = 1,\dots,n_i, \enskip i = 1,\dots,m, \label{eq.6}
\end{align}
where
\begin{align*}
    \bar{y}_{k,i}(P) = E[h_k(x_{k,i}, \phi_{0,i}, u_{k,i}; \bm{q})] = \int_Q h_k(x_{k,i}, \phi_{0,i}, u_{k,i}; \bm{q}) dP, 
\end{align*}
and $e_{k,i}$ are independent and identically distributed (i.i.d) random variables with mean $0$ and common variance $\sigma^2$, and $P$ is a probability measure on the Borel sigma algebra on $Q$. Let $\mathcal{P}(Q)$ denote the set of all probability measures defined on the Borel sigma algebra on $Q$, $\Sigma_Q$, and let $P_0$ denote the ``true" distribution of the random vector $\bm{q}$. The goal is to find an estimate of $P_0$. In order to generate an estimator for $P_0$, we employ an abstract framework based on nonlinear least squares and the Prohorov metric on $\mathcal{P}(Q)$ to establish theoretical results and computational tools. 

In \cite{Banks:2012}, Banks and his co-authors developed a Prohorov metric-based framework for estimation of the probability measure for random parameters in continuous-time dynamical systems. We briefly describe the Prohorov metric and its properties. Let $Q$ be a Hausdorff metric space with metric $d$. Define $$C_b(Q) = \{f: Q \to \mathbb{R} \enskip | \enskip f \enskip \text{is bounded and continuous}\},$$ and given any probability measure $P \in \mathcal{P}(Q)$ and some $\epsilon > 0$, an $\epsilon$-neighborhood of $P$ is defined by
    \begin{align*}
        B_\epsilon(P) = \bigg{\{} \tilde{P} \enskip \bigg{|} \enskip \Bigg{|} \int_Q f(\bm{q}) d\tilde{P}(\bm{q}) - \int_Q f(\bm{q}) dP(\bm{q}) \Bigg{|} < \epsilon, \enskip \text{for all} \enskip f \in C_b(Q) \bigg{\}}.
    \end{align*}
The Prohorov metric $\rho$ on $\mathcal{P}(Q) \times \mathcal{P}(Q)$ is then defined so that given two probability measures, $P$ and $\tilde{P}$ in $\mathcal{P}(Q)$, $\tilde{P} \in B_\epsilon(P)$ if and only if $\rho(P,\tilde{P}) < \epsilon$. For any two measures $P, \tilde{P} \in \mathcal{P}(Q)$, the Prohorov metric $\rho$ is defined as
    \begin{align*}
        \rho(P,\tilde{P}) = \inf \{ \epsilon > 0 \enskip | \enskip \tilde{P}(E) \leq P(E^\epsilon) + \epsilon \enskip \text{and} \enskip P(E) \leq \tilde{P}(E^\epsilon) + \epsilon, \enskip \text{for all} \enskip E \in \Sigma_Q \},
    \end{align*}
where $E^\epsilon$, the $\epsilon$-neighborhood of $E$ is defined by $$E^\epsilon = \{ \tilde{\bm{q}} \in Q \enskip | \enskip d(\tilde{\bm{q}},E) < \epsilon \} = \{ \tilde{\bm{q}} \in Q \enskip | \enskip \inf_{\bm{q} \in E} d(\bm{q},\tilde{\bm{q}}) < \epsilon \},$$ for all nonempty $E \in \Sigma_Q$, the Borel sigma algebra on $Q$. 

It can be shown that $(\mathcal{P}(Q),\rho)$ is in fact a metric space. Moreover, given a sequence of measures $P_M \in \mathcal{P}(Q)$ for all $M=1,\dots,\infty$, and $P \in \mathcal{P}(Q)$, we say $P_M$ converges weakly to $P$, $P_M \xrightarrow{w^*} P$, if and only if $\rho(P_M,P) \to 0$; the Prohorov metric metrizes the weak convergence of measures. It is important to note that the weak$^*$ topology and the weak topology are equivalent on the space of probability measures.

The metric space $(\mathcal{P}(Q),\rho)$ has properties that will be especially relevant in what follows. Let $D = \{\delta_{\bm{q}} \enskip | \enskip \bm{q} \in Q \},$ be the space of Dirac measures on $Q$, where for all $E \in \Sigma_Q$,
\begin{align*}
    \delta_{\bm{q}}(E) =
    \begin{dcases}
    \enskip 1 & \enskip \text{if } \bm{q} \in E \\
    \enskip 0 & \enskip \text{if } \bm{q} \notin E
    \end{dcases}
\end{align*}
then if $\bm{q}_1,\bm{q}_2 \in Q$, $\rho(\delta_{\bm{q}_1},\delta_{\bm{q}_2}) = \min\{d(\bm{q}_1,\bm{q}_2),1\}$. The metric space $(\mathcal{P}(Q),\rho)$ is separable if and only if the metric space $(Q,d)$ is separable, and in this case, the sequence $\{\bm{q}_j\}_{j=1}^\infty$ is Cauchy in $(Q,d)$ if and only if the sequence $\{\delta_{\bm{q}_j}\}_{j=1}^\infty$ is Cauchy in $(\mathcal{P}(Q),\rho)$. We also have $(Q,d)$ is complete if and only if $(\mathcal{P}(Q),\rho)$ is complete, and $(Q,d)$ is compact if and only if $(\mathcal{P}(Q),\rho)$ is compact. The details and proofs can be found in \cite{Banks:2012}.

Assume the metric space $(Q,d)$ is separable and let $Q_d = \{\bm{q}_j\}_{j=1}^{\infty}$ be a countable dense subset of $Q$. Define the dense (see \cite{Banks:2012}) subset of $\mathcal{P}(Q)$, $\tilde{\mathcal{P}}_d(Q)$, as 
\begin{align*}
    \tilde{\mathcal{P}}_d(Q) = \{P \in \mathcal{P}(Q) \enskip | \enskip P=\sum_{j=1}^{M} p_j \delta_{\bm{q}_j}, \bm{q}_j \in Q_d, M \in \mathbb{N}, p_j \in [0,1] \cap \mathbb{Q}, \sum_{j=1}^{M} p_j =1\},
\end{align*}
the collection of all convex combinations of Dirac measures on $Q$ with nodes $\bm{q}_j \in Q_d$ and rational weights $p_j$, and for each $M \in \mathbb{N}$ let
\begin{align*}
        \mathcal{P}_M(Q) = \{P \in \tilde{\mathcal{P}}_d(Q) \enskip | \enskip P=\sum_{j=1}^{M} p_j \delta_{\bm{q}_j}, \bm{q}_j \in \{\bm{q}_j\}_{j=1}^{M}\}.
\end{align*}
Let $\bm{n}=\{n_i\}_{i=1}^m$ and define
\begin{align*}
    J_{\bm{n},m}(\bm{Y};P) = \sum_{i=1}^m \sum_{k=1}^{n_i} (Y_{k,i} - \bar{y}_{k,i}(P))^2,
\end{align*}
where $\bm{Y} = (\{Y_{k,i}\}_{k=1}^{n_i})_{i=1}^{m}$ and define the estimator
\begin{align}
    P_{\bm{n},m} = \arg \min_{P \in \mathcal{P}(Q)} J_{\bm{n},m}(\bm{Y};P).
    \label{eq.7}
\end{align}

Let $\mathscr{y}_{k,i}$ be realizations of the random variables $Y_{k,i}$, and define
\begin{align}
    \hat{P}_{\bm{n},m} = \arg \min_{P \in \mathcal{P}(Q)} J_{\bm{n},m}(\pmb{\mathscr{y}};P) = \arg \min_{P \in \mathcal{P}(Q)} \sum_{i=1}^m \sum_{k=1}^{n_i} (\mathscr{y}_{k,i} - \bar{y}_{k,i}(P))^2,
    \label{eq.8}
\end{align}
where $\pmb{\mathscr{y}} = \{\{\mathscr{y}_{k,i}\}_{k=1}^{n_i}\}_{i=1}^{m}$. Now we cannot exactly compute $\hat{P}_{\bm{n},m}$ since $\bar{y}_{k,i}(P)$ typically must be approximated numerically. Let $\bar{y}^N_{k,i}(P)$ be an approximation of $\bar{y}_{k,i}(P)$ based, for example, on a Galerkin scheme with $N$ denoting the level of discretization. In addition, we define our approximating estimator over the set $\mathcal{P}_M(Q)$ where $M$ denotes the number of nodes, $\{\bm{q}_j\}_{j=1}^M$, so that the least squares optimization is now over a finite set of parameters, namely the $\{p_j\}_{j=1}^M$. Our approximating estimator is then given by
\begin{align*}
        \hat{P}_{\bm{n},m,M}^N = \arg \min_{P \in \mathcal{P}_M(Q)} J_{\bm{n},m}^N(\pmb{\mathscr{y}},P) = \arg \min_{P \in \mathcal{P}_M(Q)} \sum_{i=1}^m \sum_{k=1}^{n_i} (\mathscr{y}_{k,i} - \bar{y}^N_{k,i}(P))^2.
\end{align*}

\section{Existence and Consistency of the Least Squares Estimator}
\label{sec:3}

In order to establish the existence of the estimator $P_{\bm{n},m}$ in (\ref{eq.7}), it is sufficient to show the existence of the estimator $\hat{P}_{\bm{n},m}$ in (\ref{eq.8}). Recall that the estimator $\hat{P}_{\bm{n},m}$ is obtained from the realizations $\{\mathscr{y}_{k,i}\}, \enskip k=1,\dots,n_i, \enskip i=1,\dots,m$ of the random variables $\{Y_{k,i}\}, \enskip k=1,\dots,n_i, \enskip i=1,\dots,m$. The following theorem establishes the existence of the estimator $\hat{P}_{\bm{n},m}$.

\begin{theorem}
    For $i=1,2,\dots,m$, let $J_{n_i} : \mathbb{R}^{n_i} \times \mathcal{P}(Q) \to \mathbb{R}$ be defined by
    \begin{align*}
        J_{n_i}(\pmb{\mathscr{y}}_i;P) = \sum_{k=1}^{n_i} (\mathscr{y}_{k,i} - \bar{y}_{k,i}(P))^2,
    \end{align*}
     where $\pmb{\mathscr{y}}_i = [\mathscr{y}_{1,i},\dots,\mathscr{y}_{n_i,i}]^T$ and set
    \begin{align*}
        J_{\bm{n},m}(\pmb{\mathscr{y}};P) = \sum_{i=1}^m  J_{n_i}(\pmb{\mathscr{y}}_i;P).
    \end{align*}
    Assume $(Q,d)$ is separable and compact. Let $(\mathcal{P}(Q),\rho)$ be the space of probability measures $\mathcal{P}(Q)$ with the Prohorov metric $\rho$. Assume that for each $P \in \mathcal{P}(Q)$ we have a measurable function $J_{n_i}( . ;P) : \mathbb{R}^{n_i} \to \mathbb{R}$, and also for each $\pmb{\mathscr{y}}_i$, we have a continuous function $J_{n_i}(\pmb{\mathscr{y}}_i;.) : \mathcal{P}(Q) \to \mathbb{R}$.
    Then there exists a measurable function $\hat{P}_{\bm{n},m} : \prod_{i=1}^m \mathbb{R}^{n_i} \to \mathcal{P}(Q)$ such that $$J_{\bm{n},m}(\pmb{\mathscr{y}};\hat{P}_{\bm{n},m}) = \inf_{P \in \mathcal{P}(Q)} J_{\bm{n},m}(\pmb{\mathscr{y}};P).$$
\end{theorem}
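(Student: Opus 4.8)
The plan is to split the statement into two tasks: first, for each fixed data vector $\pmb{\mathscr{y}}$, show that the infimum of $J_{\bm{n},m}(\pmb{\mathscr{y}};\cdot)$ over $\mathcal{P}(Q)$ is attained; second, show that an attaining measure can be chosen to depend measurably on $\pmb{\mathscr{y}}\in\prod_{i=1}^m\mathbb{R}^{n_i}$. The first is a compactness argument, the second a measurable-selection argument. It suffices to work directly with $J_{\bm{n},m}$: being a finite sum of the $J_{n_i}$, it is continuous in $P$ for each fixed $\pmb{\mathscr{y}}$ and measurable in $\pmb{\mathscr{y}}$ for each fixed $P$, i.e.\ it is a Carath\'eodory function on $\left(\prod_{i=1}^m\mathbb{R}^{n_i}\right)\times\mathcal{P}(Q)$, and in particular jointly measurable with respect to the product of the Borel $\sigma$-algebras.

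First I would establish pointwise existence. Since $(Q,d)$ is separable and compact, the properties of the Prohorov metric recalled above (from \cite{Banks:2012}) give that $(\mathcal{P}(Q),\rho)$ is itself a compact metric space. The map $P\mapsto J_{\bm{n},m}(\pmb{\mathscr{y}};P)$ is continuous and real-valued on this compact space, hence bounded below and attains its infimum. Thus, with $v(\pmb{\mathscr{y})}=\inf_{\tilde P\in\mathcal{P}(Q)}J_{\bm{n},m}(\pmb{\mathscr{y}};\tilde P)$, the set
$$\mathcal{M}(\pmb{\mathscr{y}})=\Big\{P\in\mathcal{P}(Q)\ \Big|\ J_{\bm{n},m}(\pmb{\mathscr{y}};P)=v(\pmb{\mathscr{y}})\Big\}$$
is nonempty, and it is closed as the level set where a continuous function meets its minimum value.

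Then I would upgrade this to a measurable selection. Because $(\mathcal{P}(Q),\rho)$ is compact metric it is separable; fixing a countable dense set $\{P_\ell\}_{\ell=1}^\infty\subset\mathcal{P}(Q)$, continuity in $P$ yields $v(\pmb{\mathscr{y}})=\inf_\ell J_{\bm{n},m}(\pmb{\mathscr{y}};P_\ell)$, a countable infimum of measurable functions of $\pmb{\mathscr{y}}$, so $v$ is measurable. Consequently the graph
$$\mathrm{Gr}(\mathcal{M})=\Big\{(\pmb{\mathscr{y}},P)\ \Big|\ J_{\bm{n},m}(\pmb{\mathscr{y}};P)-v(\pmb{\mathscr{y}})\le 0\Big\}$$
is product-measurable (using joint measurability of $J_{\bm{n},m}$ and measurability of $v$), so $\mathcal{M}$ is a measurable multifunction with nonempty closed — hence compact — values into the Polish space $(\mathcal{P}(Q),\rho)$. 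The Kuratowski--Ryll-Nardzewski measurable selection theorem (or, equivalently, an off-the-shelf selection theorem for Carath\'eodory minimization problems, e.g.\ in Aubin--Frankowska) then provides a measurable selection $\hat P_{\bm{n},m}:\prod_{i=1}^m\mathbb{R}^{n_i}\to\mathcal{P}(Q)$ with $\hat P_{\bm{n},m}(\pmb{\mathscr{y}})\in\mathcal{M}(\pmb{\mathscr{y}})$ for every $\pmb{\mathscr{y}}$, which is exactly the assertion $J_{\bm{n},m}(\pmb{\mathscr{y}};\hat P_{\bm{n},m})=\inf_{P\in\mathcal{P}(Q)}J_{\bm{n},m}(\pmb{\mathscr{y}};P)$.

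The compactness step is routine; the step requiring care is verifying the hypotheses of the selection theorem — namely that $\mathcal{M}$ really is a measurable closed-valued multifunction. This rests on two facts already available: the joint measurability of $J_{\bm{n},m}$ coming from its Carath\'eodory structure, and the measurability of the value function $v$, which in turn uses the separability of $(\mathcal{P}(Q),\rho)$, itself a consequence of the separability of $(Q,d)$. I expect the final write-up will either assemble these ingredients and cite Kuratowski--Ryll-Nardzewski, or invoke a packaged Carath\'eodory-minimization selection result; in either case the compactness of $(\mathcal{P}(Q),\rho)$ is what makes the argument go through cleanly, and it is the one place where the compactness hypothesis on $Q$ is genuinely used.
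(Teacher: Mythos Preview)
Your argument is correct. The paper does not actually supply a proof of this theorem: it says only that ``the theorem can be proved in a similar way as in \cite{Banks:2012}'' and moves on. Your write-up --- attainment via compactness of $(\mathcal{P}(Q),\rho)$, then a measurable selection via the Carath\'eodory structure of $J_{\bm{n},m}$ and Kuratowski--Ryll-Nardzewski --- is the standard route and is presumably what that reference contains, so you have in effect filled in what the paper leaves to citation.
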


\begin{proof}
    The theorem can be proved in a similar way as in \cite{Banks:2012}.
    \qed
\end{proof}

Next, we establish the consistency of the estimator $P_{\bm{n},m}$ by showing that $P_{\bm{n},m}$ converges to the ``true" distribution, $P_0$, in probability.

Consider the following assumptions for a fixed $m$,
\begin{itemize}[leftmargin=0.25in]
    \item[\textbf{A0.}] Let $n=\max n_i$, and let $y_{k,i} = 0$ for $n_i < k \leq n, \enskip i=1,\dots,m$.
    
    \item[\textbf{A1.}] For any fixed $n$ and $m$, $\{e_{k,i}\}, \enskip k=1,\dots,n, \enskip i=1,\dots,m$, introduced in equation (\ref{eq.6}), are i.i.d. with $E[e_{k,i}]=0 \enskip \text{and} \enskip \text{Var}(e_{k,i}) = \sigma^2$ on some probability space $(\Omega,\Sigma_\Omega,P_\Omega)$.
    
    \item[\textbf{A2.}] Let $(\mathcal{P}(Q),\rho)$ be the space of probability measures $\mathcal{P}(Q)$ with the Prohorov metric $\rho$, where $(Q,d)$ is a separable and compact metric space.
    
    \item[\textbf{A3.}] For $T>0$ and $n \in \mathbb{N}$, set $\tau^n=T/n$ and $t_k^n = k\tau^n$, $k=1,\dots,n$. Then, for each $i=1,\dots,m$, there exists $\bar{y}_{i} \in C(\mathcal{P}(Q),C([0,T]))$ such that $\bar{y}_{i}(t_k^n;P) = \bar{y}_{k,i}(P)$, for $P \in \mathcal{P}(Q)$, $k=1,\dots,n$.
    
    \item[\textbf{A4.}] There exists a measure $\mu$ on $[0,T]$ such that for all $f \in C([0,T])$, as $n \to \infty$,
    \begin{align*}
        \frac{1}{n} \sum_{k=1}^{n} f(t_k^n) = \int_0^T f(t)d\mu_n(t) \to \int_0^T f(t)d\mu(t).
    \end{align*}
    
    \item[\textbf{A5.}] The ``true" distribution $P_0 \in \mathcal{P}(Q)$ is the unique minimizer of $J_0(P)$ where
    
    \begin{align*}
        J_0(P) = \sigma^2 + \frac{1}{m} \sum_{i=1}^{m} \int_0^T (\bar{y}_i(t;P_0)-\bar{y}_i(t;P))^2 d\mu(t).
    \end{align*}
\end{itemize}

\begin{theorem}
    Under assumptions \emph{\textbf{A0-A5}}, define
    \begin{align*}
    J_{n,m}(\bm{Y};P) = \sum_{i=1}^m \sum_{k=1}^{n} (Y_{k,i} - \bar{y}_i(t_k^n;P))^2,
    \end{align*}
    there exists a set $A \in \Sigma_\Omega$ with probability one such that for fixed $m$ and for all $\omega\in A$,
    \begin{align*}
        \frac{1}{m n} J_{n,m}(\bm{Y};P)(\omega) \to J_0(P)(\omega)
    \end{align*}
    as $n \to \infty$, for each $P \in \mathcal{P}(Q)$. In addition, the convergence is uniform on $\mathcal{P}(Q)$.
    \label{thm:consis}
\end{theorem}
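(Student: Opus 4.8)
The plan is to expand the square defining $J_{n,m}(\bm{Y};P)$ and analyze the three resulting pieces separately. Using the statistical model (\ref{eq.6}) together with convention \textbf{A0} and assumption \textbf{A3}, write $Y_{k,i}(\omega)=\bar{y}_i(t_k^n;P_0)+e_{k,i}(\omega)$, so that
\begin{align*}
\tfrac{1}{mn}J_{n,m}(\bm{Y};P)(\omega)=T_1^n(P)+T_2^n(\omega,P)+T_3^n(\omega),
\end{align*}
where $T_1^n(P)=\tfrac{1}{m}\sum_{i=1}^m\tfrac{1}{n}\sum_{k=1}^n\big(\bar{y}_i(t_k^n;P_0)-\bar{y}_i(t_k^n;P)\big)^2$, the cross term is $T_2^n(\omega,P)=\tfrac{2}{m}\sum_{i=1}^m\tfrac{1}{n}\sum_{k=1}^n e_{k,i}(\omega)\big(\bar{y}_i(t_k^n;P_0)-\bar{y}_i(t_k^n;P)\big)$, and $T_3^n(\omega)=\tfrac{1}{m}\sum_{i=1}^m\tfrac{1}{n}\sum_{k=1}^n e_{k,i}(\omega)^2$. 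The piece $T_3^n$ is independent of $P$, and since by \textbf{A1} the $e_{k,i}$ are i.i.d.\ with finite second moment $\sigma^2$, the strong law of large numbers gives $T_3^n(\omega)\to\sigma^2$ almost surely.

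For $T_1^n$, observe that by \textbf{A3} the map $P\mapsto\bar{y}_i(\cdot;P)$ is continuous from $(\mathcal{P}(Q),\rho)$ into $C([0,T])$; by \textbf{A2} and the cited properties of the Prohorov metric, $(\mathcal{P}(Q),\rho)$ is compact, so $\mathcal{K}_i:=\{(\bar{y}_i(\cdot;P_0)-\bar{y}_i(\cdot;P))^2:P\in\mathcal{P}(Q)\}$ is a compact subset of $C([0,T])$. The functionals $L_n(f)=\tfrac{1}{n}\sum_{k=1}^n f(t_k^n)$ and $L(f)=\int_0^T f\,d\mu$ have operator norm at most $1$ on $C([0,T])$, hence $\{L_n-L\}_n$ is equicontinuous; combined with the pointwise convergence $L_n(f)\to L(f)$ for every $f\in C([0,T])$ asserted in \textbf{A4}, a standard argument gives uniform convergence of $L_n$ to $L$ on the compact set $\mathcal{K}_i$. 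Summing over the finitely many $i$ yields $\sup_{P\in\mathcal{P}(Q)}\big|T_1^n(P)-\tfrac{1}{m}\sum_{i=1}^m\int_0^T(\bar{y}_i(t;P_0)-\bar{y}_i(t;P))^2\,d\mu(t)\big|\to 0$, and the limit here is precisely $J_0(P)-\sigma^2$.

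The crux is $T_2^n$. Fix $i$, put $G_i:=\{\bar{y}_i(\cdot;P_0)-\bar{y}_i(\cdot;P):P\in\mathcal{P}(Q)\}$ (compact in $C([0,T])$, as above), and introduce the random linear functionals $\Lambda_n^i(f):=\tfrac{1}{n}\sum_{k=1}^n e_{k,i}(\omega)f(t_k^n)$, so that the inner sum in $T_2^n$ equals $\Lambda_n^i(\bar{y}_i(\cdot;P_0)-\bar{y}_i(\cdot;P))$ and $\|\Lambda_n^i\|\le\tfrac{1}{n}\sum_{k=1}^n|e_{k,i}(\omega)|$. I would work on the probability-one event on which both $\tfrac{1}{N}\sum_{k=1}^N|e_{k,i}|\to E|e_{1,i}|<\infty$ and $S_N^i/N\to0$ hold (SLLN, with $S_N^i=\sum_{k=1}^N e_{k,i}$): on this event $\{\Lambda_n^i\}_n$ is norm-bounded, hence equicontinuous, and $\Lambda_n^i(f)\to0$ for every step function $f$ on $[0,T]$, because such an $f$ makes $\Lambda_n^i(f)$ a finite linear combination of terms $\tfrac{1}{n}\sum_{k:\,t_k^n\in[a,b)}e_{k,i}=\tfrac{1}{n}\big(S_{\lceil nb/T\rceil-1}^i-S_{\lceil na/T\rceil-1}^i\big)$, each of which tends to $0$ since $\tfrac{1}{n}\lceil n\theta/T\rceil\to\theta/T$ for $\theta\in[0,T]$ and $S_N^i/N\to0$. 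As the step functions are dense in $C([0,T])$, equicontinuity promotes this to $\Lambda_n^i(f)\to0$ for all $f\in C([0,T])$ and then to $\sup_{f\in G_i}|\Lambda_n^i(f)|\to0$; summing over $i$ gives $\sup_{P\in\mathcal{P}(Q)}|T_2^n(\omega,P)|\to0$ almost surely.

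Intersecting the finitely many probability-one sets from the three steps produces $A\in\Sigma_\Omega$ with $P_\Omega(A)=1$ on which $\sup_{P\in\mathcal{P}(Q)}\big|\tfrac{1}{mn}J_{n,m}(\bm{Y};P)(\omega)-J_0(P)\big|\to0$ as $n\to\infty$ (recall $J_0(P)$ is deterministic), which gives both the pointwise and the uniform assertions. I expect the main obstacle to be exactly the almost-sure vanishing of the cross term $T_2^n$: since the quadrature nodes $t_k^n$ change with $n$, $\tfrac{1}{n}\sum_k e_{k,i}f(t_k^n)$ is not a partial-sum process, so a direct subsequence-and-Borel--Cantelli argument leaves uncontrolled gaps; the reduction to the SLLN for the $S_N^i$ via step-function approximation together with equicontinuity of the $\Lambda_n^i$ is what carries the argument, and it is also where the uniformity over $\mathcal{P}(Q)$ is bought cheaply (through compactness of $G_i$).
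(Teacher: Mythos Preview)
Your proof is correct and follows a genuinely different path from the paper's. Both arguments start from the same three-term expansion and dispatch $T_3^n$ by the strong law. For the deterministic piece $T_1^n$, the paper simply applies \textbf{A4} pointwise in $P$ and defers uniformity; you instead note that the quadrature functionals $L_n$ are uniformly bounded on $C([0,T])$ and use compactness of the image set $\mathcal{K}_i$ to get uniform convergence at once. The main divergence is in the cross term. The paper fixes $P$, invokes Kolmogorov's strong law for the weighted errors $\big(\bar{y}_i(t_k^n;P_0)-\bar{y}_i(t_k^n;P)\big)e_{k,i}$, builds the probability-one set $A$ as a countable intersection over the dense family $\tilde{\mathcal{P}}_d(Q)$, extends to all $P$ by density together with the bound $\tfrac{1}{mn}\sum|e_{k,i}|<E|e_{1,1}|+1$, and finally establishes equicontinuity in $P$ and invokes Arzel\`a--Ascoli for uniformity. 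Your route replaces all of this with a single functional-analytic reduction: viewing the inner sum as a random functional $\Lambda_n^i$ on $C([0,T])$, you need only the ordinary SLLN for $S_N^i$ and for $\sum|e_{k,i}|$ (finitely many events, independent of $P$), and the step-function approximation plus equicontinuity delivers $\Lambda_n^i\to 0$ uniformly on the compact set $G_i$, hence uniformly in $P$, in one stroke. This is tidier and also sidesteps a delicate point in the paper's argument: because the nodes $t_k^n=kT/n$ change with $n$, the weighted errors form a triangular array rather than a fixed sequence, so the sequential form of Kolmogorov's SLLN does not apply directly; your partial-sum reduction avoids that issue entirely.
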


\begin{proof}
Let $P \in \mathcal{P}(Q)$ be fixed. We have
\begin{align}
    \frac{1}{m n} J_{n,m}(\bm{Y};P) &= \frac{1}{m n} \sum_{i=1}^m \sum_{k=1}^{n} (Y_{k,i} - \bar{y}_i(t_k^n;P))^2 \nonumber \\
    &= \frac{1}{m n} \sum_{i=1}^m \sum_{k=1}^{n} (\bar{y}_i(t_k^n; P_0) + e_{k,i} - \bar{y}_i(t_k^n;P))^2 \nonumber \\
    &= \frac{1}{m n} \sum_{i=1}^m \sum_{k=1}^{n} \bigg{[} e_{k,i}^2 + 2 \big{(} \bar{y}_i(t_k^n; P_0) - \bar{y}_i(t_k^n;P) \big{)} e_{k,i} + \big{(} \bar{y}_i(t_k^n; P_0) - \bar{y}_i(t_k^n;P) \big{)} ^2 \bigg{]} \nonumber \\
    &= \frac{1}{m n} \sum_{i=1}^m \sum_{k=1}^{n} e_{k,i}^2 \label{eq.9}\\
    & \enskip \enskip + \frac{2}{m n} \sum_{i=1}^m \sum_{k=1}^{n} \big{(} \bar{y}_i(t_k^n; P_0) - \bar{y}_i(t_k^n;P) \big{)} e_{k,i} \label{eq.10}\\
    & \enskip \enskip + \frac{1}{m n} \sum_{i=1}^m \sum_{k=1}^{n} \big{(} \bar{y}_i(t_k^n; P_0) - \bar{y}_i(t_k^n;P) \big{)} ^2 \label{eq.11}.
\end{align}

For the first term, (\ref{eq.9}), by the Strong Law of Large Numbers, we have
\begin{align*}
    P_\Omega(\{\omega\in \Omega \enskip | \enskip \frac{1}{m n} \sum_{i=1}^m \sum_{k=1}^{n} e_{k,i}^2 \to \sigma^2\}) = 1.
\end{align*}

For the second term, (\ref{eq.10}), let $\tilde{e}_{k,i}= \big{(} \bar{y}_i(t_k^n; P_0) - \bar{y}_i(t_k^n;P) \big{)} e_{k,i}$, by continuity of $\bar{y}_i(t;P)$ and compactness of $[0,T]$, we have

\begin{align*}
    E[\tilde{e}_{k,i}] &= \big{(} \bar{y}_i(t_k^n; P_0) - \bar{y}_i(t_k^n;P) \big{)} E[e_{k,i}] = 0.\\
    \text{Var}[\tilde{e}_{k,i}] &= \big{(} \bar{y}_i(t_k^n; P_0) - \bar{y}_i(t_k^n;P) \big{)} ^2 \enskip \text{Var}[e_{k,i}]\\ &= \sigma^2 \big{(} \bar{y}_i(t_k^n; P_0) - \bar{y}_i(t_k^n;P) \big{)} ^2\\
    & \leq \sigma^2 \sup_{t \in [0,T]} \big{(} \bar{y}_i(t; P_0) - \bar{y}_i(t;P) \big{)} ^2\\
    & < \infty.
\end{align*}
So we have $\text{Var}[\tilde{e}_{k,i}] \leq S_P$ for some $S_P$; therefore, we have
\begin{align*}
    \sum_{k=1}^{\infty} \frac{\text{Var}[\tilde{e}_{k,i}]}{k^2} \leq S_P \sum_{k=1}^{\infty} \frac{1}{k^2} < \infty.
\end{align*}
Hence, by Kolmogorov's Law of Large Numbers, we have
\begin{align*}
    P_\Omega(\{\omega\in \Omega \enskip | \enskip \frac{2}{m n} \sum_{i=1}^m \sum_{k=1}^{n} \big{(} \bar{y}_i(t_k^n; P_0) - \bar{y}_i(t_k^n;P) \big{)} e_{k,i} \to 0\}) = 1.
\end{align*}

For the third term, (\ref{eq.11}), by assumptions \textbf{A4}, \textbf{A5}, and continuity of $\bar{y}_i(t;P)$, we have
\begin{align*}
    \frac{1}{m n} \sum_{i=1}^m \sum_{k=1}^{n} \big{(} \bar{y}_i(t_k^n; P_0) - \bar{y}_i(t_k^n;P) \big{)} ^2 & \to \frac{1}{m} \sum_{i=1}^{m} \int_0^T \big{(} \bar{y}_i(t; P_0) - \bar{y}_i(t;P) \big{)} ^2 d\mu(t) \\
    & \enskip \enskip = J_0(P) - \sigma^2.
\end{align*}
Let 
\begin{align*}
    A_P &= \{\omega\in \Omega \enskip | \enskip \frac{1}{m n} \sum_{i=1}^m \sum_{k=1}^{n} e_{k,i}^2 \to \sigma^2\} \\
    & \enskip \enskip \enskip \cap \enskip  \{\omega\in \Omega \enskip | \enskip \frac{2}{m n} \sum_{i=1}^m \sum_{k=1}^{n} \big{(} \bar{y}_i(t_k^n; P_0) - \bar{y}_i(t_k^n;P) \big{)} e_{k,i} \to 0\},
\end{align*}
then $P_\Omega(A_P) = 1$ and for every $\omega \in A_P$, by adding the three terms above, we get
\begin{align*}
    \frac{1}{m n} J_{n,m}(\bm{Y};P)(\omega) \to \sigma^2 + 0 + J_0(P)(\omega) - \sigma^2 = J_0(P)(\omega).
\end{align*}

Next, we want to find $A$ with $P_\Omega(A)=1$ such that we get the convergence for every $\omega \in A$ and for all $P \in \mathcal{P}(Q)$. Let $A_1 = \{ \omega \in \Omega \enskip | \enskip \frac{1}{m n} \sum_{i=1}^{m} \sum_{k=1}^n |e_{k,i}| \to E[|e_{1,1}|]\}$.
By the Strong Law of Large Numbers, $P_\Omega (A_1) = 1$.

Similar to the idea in \cite{Banks:2012}, we let $A = A_1 \bigcap \big[\bigcap_{P \in \tilde{\mathcal{P}}_d(Q)} A_P \big]$.
We take the intersection over a countable number of sets, each with probability one, so $P_\Omega(A)=1$. We must show that $A \subset A_P$ for all $P \in \mathcal{P}(Q)$ from which will follow the desired convergence for every $\omega \in A$ and for all $P \in \mathcal{P}(Q)$.
For all $\omega \in A$, we have $\omega \in A_1$. Choose $N_1$ such that for all $n \geq N_1$
\begin{align*}
    \frac{1}{m n} \sum_{i=1}^{m} \sum_{k=1}^n |e_{k,i}| < E[|e_{1,1}|] + 1.
\end{align*}
By Assumption $\bf{A3}$ and $\tilde{\mathcal{P}}_d(Q)$ being dense in $\mathcal{P}(Q)$, we choose $P_M \in \tilde{\mathcal{P}}_d(Q)$ such that for all $i=1,\dots,m$
\begin{align*}
    \sup_{t \in [0,T]}|\bar{y}_i(t;P) - \bar{y}_i(t;P_M)| < \frac{\epsilon}{4\big{(}E[|e_{1,1}|]+1\big{)}}.
\end{align*}
Then for all $\omega \in A$, we have $\omega \in A_{P_M}$, and therefore that $$\omega \in \{\omega\in \Omega \enskip | \enskip \frac{2}{m n} \sum_{i=1}^m \sum_{k=1}^{n} \big{(} \bar{y}_i(t_k^n; P_0) - \bar{y}_i(t_k^n;P_M) \big{)} e_{k,i} \to 0\}.$$ Choose $N_2$ such that for all $n \geq N_2$,
\begin{align*}
    \Bigg{|} \frac{2}{m n} \sum_{i=1}^m \sum_{k=1}^{n} \big{(} \bar{y}_i(t_k^n; P_0) - \bar{y}_i(t_k^n;P_M) \big{)} e_{k,i}\Bigg{|} < \frac{\epsilon}{2}.
\end{align*}
Then for $n \geq \max\{N_1,N_2\}$,
\begin{align}
    \Bigg{|} \frac{1}{m n} J_{n,m}(\bm{Y};P) - J_0(P) \Bigg{|} &\leq \Bigg{|} \frac{1}{m n} \sum_{i=1}^m \sum_{k=1}^{n} e_{k,i}^2 - \sigma^2 \Bigg{|} \label{eq.12} \\
    & \enskip + \Bigg{|} \frac{2}{m n} \sum_{i=1}^m \sum_{k=1}^{n} \big{(} \bar{y}_i(t_k^n; P_0) - \bar{y}_i(t_k^n;P) \big{)} e_{k,i} \Bigg{|} \label{eq.13} \\ 
    & \enskip + \Bigg{|} \frac{1}{m n} \sum_{i=1}^m \sum_{k=1}^{n} \big{(} \bar{y}_i(t_k^n; P_0) - \bar{y}_i(t_k^n;P) \big{)} ^2 \label{eq.14} \\
    & \enskip \enskip \enskip \enskip - \frac{1}{m} \sum_{i=1}^{m} \int_0^T \big{(} \bar{y}_i(t; P_0) - \bar{y}_i(t;P) \big{)} ^2 d\mu(t) \Bigg{|}. \nonumber
\end{align}

For the first term, (\ref{eq.12}), since $\omega \in A$, we have $$\omega \in \{\omega\in \Omega \enskip | \enskip \frac{1}{m n} \sum_{i=1}^m \sum_{k=1}^{n} e_{k,i}^2 \to \sigma^2\},$$ so $$\Bigg{|} \frac{1}{m n} \sum_{i=1}^m \sum_{k=1}^{n} e_{k,i}^2 - \sigma^2 \Bigg{|} \to 0.$$

For the second term, (\ref{eq.13}),
\begin{align*}
    & \Bigg{|} \frac{2}{m n} \sum_{i=1}^m \sum_{k=1}^{n} \big{(} \bar{y}_i(t_k^n; P_0) - \bar{y}_i(t_k^n;P) \big{)} e_{k,i} \Bigg{|} \\
    &= \Bigg{|} \frac{2}{m n} \sum_{i=1}^m \sum_{k=1}^{n} \big{(} \bar{y}_i(t_k^n; P_0) - \bar{y}_i(t_k^n;P) + \bar{y}_i(t_k^n;P_M) - \bar{y}_i(t_k^n;P_M) \big{)} e_{k,i} \Bigg{|} \\
    &\leq \Bigg{|} \frac{2}{m n} \sum_{i=1}^m \sum_{k=1}^{n} \big{(} \bar{y}_i(t_k^n; P_0) - \bar{y}_i(t_k^n;P_M) \big{)} e_{k,i} \Bigg{|} + \frac{2}{m n} \sum_{i=1}^m \sum_{k=1}^{n} \big{|} \big{(} \bar{y}_i(t_k^n; P_M) - \bar{y}_i(t_k^n;P) \big{)} \big{|} \big{|} e_{k,i} \big{|} \\
    &< \frac{\epsilon}{2} + \frac{2}{m n} \sum_{i=1}^{m} \sum_{k=1}^n \sup_{t \in [0,T]} \big{|} \bar{y}_i(t;P_M) - \bar{y}_i(t;P) \big{|} \big{|} e_{k,i} \big{|} \\
    &< \frac{\epsilon}{2} + 2 \Bigg{(} \frac{\epsilon}{4\big{(}E[|e_{1,1}|]+1\big{)}} \Bigg{)} \frac{1}{m n} \sum_{i=1}^{m} \sum_{k=1}^n \big{|} e_{k,i} \big{|} \\
    &< \frac{\epsilon}{2} + 2 \Bigg{(} \frac{\epsilon}{4\big{(}E[|e_{1,1}|]+1\big{)}} \Bigg{)} \big{(} E[|e_{1,1}|] + 1 \big{)} \\
    &= \frac{\epsilon}{2} + \frac{\epsilon}{2} = \epsilon.
\end{align*}

For the third term, (\ref{eq.14}), by assumptions \textbf{A4} and \textbf{A5}, we have
\begin{align*}
    \Bigg{|} \frac{1}{m n} \sum_{i=1}^m \sum_{k=1}^{n} (\bar{y}_i(t_k^n; P_0) - \bar{y}_i(t_k^n;P))^2 - \frac{1}{m} \sum_{i=1}^{m} \int_0^T (\bar{y}_i(t; P_0) - \bar{y}_i(t;P))^2 d\mu(t) \Bigg{|} \to 0.
\end{align*}
Therefore we have $\frac{1}{m n} J_{n,m}(\bm{Y};P)(\omega) \to J_0(P)(\omega)$ for every $\omega \in A$ and for all $P \in \mathcal{P}(Q)$.


Next, we want to show that the convergence is uniform on $\mathcal{P}(Q)$ for $\omega \in A$. First, we will show that the sequence of functions $\frac{1}{m n} J_{n,m}(\bm{Y};P)$, $n=1,\dots,\infty$ is equicontinuous as functions of $P$. Fix $\omega \in A$, let $\epsilon > 0$, and take $P \in \mathcal{P}(Q)$. By compactness of $[0,T]$ and Assumption $\bf{A3}$, there exists a $\delta > 0$ such that for all $\tilde{P} \in B_\delta (P)$, and for all $i=1,\dots,m$,
\begin{align*}
    \sup_{t \in [0,T]} |\bar{y}_i(t;P) - \bar{y}_i(t;\tilde{P})|& < \min \bigg{\{} \frac{\epsilon}{6\big{(}E[|e_{1,1}|]+1\big{)}} \enskip , \enskip \frac{\epsilon}{3 \big{(}\sup\limits_{t \in [0,T]} |\bar{y}_i(t; P_0)|\big{)}} \bigg{\}},\\
    \sup_{t \in [0,T]} |\bar{y}_i(t;P)^2 - \bar{y}_i(t;\tilde{P})^2|& < \frac{\epsilon}{3}.
\end{align*}
Also, since $\omega \in A$, we have $\omega \in A_1$. We choose $N_3$ such that $n \geq N_3$ implies 
\begin{align*}
    \frac{1}{m n} \sum_{i=1}^{m} \sum_{k=1}^n |e_{k,i}| < E[|e_{1,1}|] + 1.
\end{align*}
Then for $n \geq N_3$ and for all $\tilde{P} \in B_\delta (P)$,
\begin{align*}
    & \Bigg{|} \frac{1}{m n} J_{n,m}(\bm{Y};P) - \frac{1}{m n} J_{n,m}(\bm{Y};\tilde{P}) \Bigg{|}\\
    &= \Bigg{|} \frac{1}{m n} \sum_{i=1}^m \sum_{k=1}^{n} (\bar{y}_i(t_k^n; P_0) + e_{k,i} - \bar{y}_i(t_k^n;P))^2 - \frac{1}{m n} \sum_{i=1}^m \sum_{k=1}^{n} (\bar{y}_i(t_k^n; P_0) + e_{k,i} - \bar{y}_i(t_k^n;\tilde{P}))^2 \Bigg{|}\\
    &= \Bigg{|} \frac{1}{m n} \sum_{i=1}^m \sum_{k=1}^{n} \bigg{(} 2e_{k,i} + \bar{y}_i(t_k^n; P_0) - \bar{y}_i(t_k^n;P) - \bar{y}_i(t_k^n; \tilde{P}) \bigg{)} \bigg{(} \bar{y}_i(t_k^n; \tilde{P}) - \bar{y}_i(t_k^n;P) \bigg{)} \Bigg{|}\\
    & \leq \frac{1}{m n} \sum_{i=1}^m \sum_{k=1}^{n} 2 \big{|} e_{k,i} \big{|} \big{|} \bar{y}_i(t_k^n; \tilde{P}) - \bar{y}_i(t_k^n;P) \big{|}\\
    & \enskip + \frac{1}{m n} \sum_{i=1}^m \sum_{k=1}^{n} \big{|} \bar{y}_i(t_k^n; P_0) \big{|}  \big{|} \bar{y}_i(t_k^n; \tilde{P}) - \bar{y}_i(t_k^n;P) \big{|}\\
    & \enskip + \frac{1}{m n} \sum_{i=1}^m \sum_{k=1}^{n} \big{|} \bar{y}_i(t_k^n;P)^2 - \bar{y}_i(t_k^n; \tilde{P})^2 \big{|}\\
    & \leq \frac{2}{m n} \sum_{i=1}^m \sum_{k=1}^{n} \big{|} e_{k,i} \big{|} \bigg{(} \sup_{t \in [0,T]} \big{|} \bar{y}_i(t; \tilde{P}) - \bar{y}_i(t; P) \big{|} \bigg{)}\\
    & \enskip + \frac{1}{m n} \sum_{i=1}^m \sum_{k=1}^{n} \big{|} \bar{y}_i(t_k^n; P_0) \big{|}  \bigg{(} \sup_{t \in [0,T]} \big{|} \bar{y}_i(t; \tilde{P}) - \bar{y}_i(t; P) \big{|} \bigg{)}\\
    & \enskip + \frac{1}{m} \sum_{i=1}^{m} \sup_{t \in [0,T]} \big{|} \bar{y}_i(t; P)^2 - \bar{y}_i(t; \tilde{P})^2 \big{|}\\
    &< 2 \big{(} E[|e_{1,1}|] + 1\big{)} \Bigg{(}
    \frac{\epsilon}{6\big{(}E[|e_{1,1}|]+1\big{)}} \Bigg{)} \\
    & \enskip + \frac{1}{m n} \sum_{i=1}^m \sum_{k=1}^{n} \sup_{t \in [0,T]} \big{|} \bar{y}_i(t; P_0) \big{|} \bigg{(} \frac{\epsilon}{3 \big{(}\sup\limits_{t \in [0,T]} |\bar{y}_i(t; P_0)|\big{)}} \bigg{)}\\
    & \enskip + \frac{1}{m} \sum_{i=1}^{m} \sup_{t \in [0,T]} \big{|} \bar{y}_i(t; P)^2 - \bar{y}_i(t; \tilde{P})^2 \big{|}\\
    &< \frac{\epsilon}{3} + \frac{\epsilon}{3} + \frac{\epsilon}{3} = \epsilon.
\end{align*}
Therefore, the sequence of functions $\frac{1}{m n} J_{n,m}(\bm{Y};P)$, $n=1,\dots,\infty$ is equicontinuous as functions of $P$, and by the Arzela-Ascoli Theorem, $\frac{1}{m n} J_{n,m}(\bm{Y};P) \to J_0(P)$ uniformly on compact subsets of $\mathcal{P}(Q)$, and therefore we have uniform convergence on $\mathcal{P}(Q)$.
\qed
\end{proof}

\begin{theorem}
    Under assumptions \emph{\textbf{A0-A5}}, define
    \begin{align*}
    P_{n,m} = \arg \min_{P \in \mathcal{P}(Q)} J_{n,m}(\bm{Y};P) = \arg \min_{P \in \mathcal{P}(Q)} \sum_{i=1}^m \sum_{k=1}^{n} (Y_{k,i} - \bar{y}_i(t_k^n;P))^2,
    \end{align*}
    for a fixed $m$, $\rho(P_{n,m},P_0) \to 0$ with probability one as $n \to \infty$, where $\rho$ is the Prohorov metric.
\end{theorem}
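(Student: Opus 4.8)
The plan is to run the standard argmin-consistency argument: uniform convergence of the data-misfit functionals together with a \emph{unique} minimizer of the limiting functional on a compact metric space forces the minimizers to converge. All the ingredients I need are already in hand. From Assumption \textbf{A2} and the properties of the Prohorov metric recalled in Section \ref{sec:2}, compactness of $(Q,d)$ gives compactness, hence sequential compactness, of $(\mathcal{P}(Q),\rho)$. Theorem \ref{thm:consis} furnishes a single event $A \in \Sigma_\Omega$ with $P_\Omega(A)=1$ on which $\frac{1}{mn}J_{n,m}(\bm{Y};\cdot) \to J_0(\cdot)$ uniformly over $\mathcal{P}(Q)$; as a uniform limit of continuous functions (or directly from Assumption \textbf{A3}), $J_0$ is continuous on $(\mathcal{P}(Q),\rho)$, and by Assumption \textbf{A5} it has the unique minimizer $P_0$. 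Existence and measurability of the estimator $P_{n,m}$ itself I would take from the existence theorem established above.

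First I would fix $\omega \in A$ and, using sequential compactness, reduce the claim to showing that every convergent subsequence of $\{P_{n,m}(\omega)\}_n$ has limit $P_0$ --- this then forces $\rho(P_{n,m}(\omega),P_0)\to 0$. So suppose $P_{n_j,m}(\omega)\to P^{*}$. Next I would pass to the limit in the misfit value along this subsequence: from
\[
\big| \tfrac{1}{m n_j} J_{n_j,m}(\bm{Y};P_{n_j,m})(\omega) - J_0(P^{*}) \big| \le \sup_{P \in \mathcal{P}(Q)} \big| \tfrac{1}{m n_j} J_{n_j,m}(\bm{Y};P)(\omega) - J_0(P) \big| + \big| J_0(P_{n_j,m}(\omega)) - J_0(P^{*}) \big|
\]
the first term vanishes by the uniform convergence on $A$ and the second by continuity of $J_0$, so $\frac{1}{m n_j} J_{n_j,m}(\bm{Y};P_{n_j,m})(\omega)\to J_0(P^{*})$. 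Then I would exploit that $P_{n_j,m}(\omega)$ is a minimizer to compare against $P_0$:
\[
\tfrac{1}{m n_j} J_{n_j,m}(\bm{Y};P_{n_j,m})(\omega) \le \tfrac{1}{m n_j} J_{n_j,m}(\bm{Y};P_0)(\omega) \to J_0(P_0),
\]
again by Theorem \ref{thm:consis} at the fixed measure $P_0$. Combining the two displays gives $J_0(P^{*}) \le J_0(P_0)$, whence $P^{*}=P_0$ by the uniqueness in Assumption \textbf{A5}. Since every convergent subsequence of $\{P_{n,m}(\omega)\}_n$ then has limit $P_0$ and the ambient space is compact metric, the full sequence converges, i.e.\ $\rho(P_{n,m}(\omega),P_0)\to 0$; as this holds for all $\omega$ in the probability-one set $A$, the conclusion $\rho(P_{n,m},P_0)\to 0$ with probability one follows.

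I expect this proof to be short, because its only genuinely delicate ingredient has already been supplied inside Theorem \ref{thm:consis}: obtaining a single probability-one event on which $\frac{1}{mn}J_{n,m}(\bm{Y};\cdot) - J_0(\cdot)$ is controlled \emph{simultaneously} for all $P \in \mathcal{P}(Q)$. A naive pointwise law-of-large-numbers argument would attach a separate null set to each $P$, and uncountably many such sets cannot be unioned; it is the equicontinuity plus Arzela-Ascoli step in the proof of Theorem \ref{thm:consis} that removes this obstruction. Granting that uniformity, the remaining steps --- extracting a convergent subsequence, passing to the limit via continuity of $J_0$, comparing with $P_0$, and the concluding step that upgrades subsequential convergence to convergence of the full sequence --- are routine. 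The only minor bookkeeping is that the extracted subsequence depends on $\omega$, which is harmless for almost-sure convergence.
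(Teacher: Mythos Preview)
Your proposal is correct. It follows the classical \emph{subsequence--identification} route for argmin consistency: extract a convergent subsequence in the compact space $(\mathcal{P}(Q),\rho)$, pass to the limit in the cost using uniform convergence and continuity of $J_0$, and identify the limit as $P_0$ via the uniqueness hypothesis \textbf{A5}.

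The paper instead uses a direct $\varepsilon$--$\delta$ argument: for any $\delta>0$ it exploits compactness of $(B_\delta(P_0))^C$ and uniqueness of $P_0$ to find $\varepsilon>0$ with $J_0(P)-J_0(P_0)>2\varepsilon$ on that complement, then uses the uniform convergence from Theorem~\ref{thm:consis} to show $\frac{1}{mn}J_{n,m}(\bm{Y};P)>\frac{1}{mn}J_{n,m}(\bm{Y};P_0)$ for all $P\notin B_\delta(P_0)$ once $n$ is large, forcing the minimizer $P_{n,m}$ into $B_\delta(P_0)$. Both proofs rest on the same three ingredients (compactness of $\mathcal{P}(Q)$, uniform convergence on the probability-one set $A$, and uniqueness of the limiting minimizer), and both implicitly use continuity of $J_0$. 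The paper's version is slightly more quantitative, giving an explicit threshold $N_4$ beyond which $P_{n,m}\in B_\delta(P_0)$, while yours is a bit more streamlined and makes the role of sequential compactness explicit; neither buys anything the other lacks.
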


\begin{proof}
Fix $\omega \in A$ (defined in Theorem \ref{thm:consis}). We know that $\frac{1}{m n} J_{n,m}(\bm{Y};P)(\omega) \to J_0(P)$ for every $P \in \mathcal{P}(Q)$ by the previous theorem. Let $\delta > 0$, and define the $\delta$-neighborhood of $P_0$ by 

\begin{align*}
    B_\delta(P_0) = \bigg{\{} \tilde{P} \enskip \bigg{|} \enskip \Bigg{|} \int_Q f(\bm{q}) d\tilde{P} - \int_Q f(\bm{q}) dP_0 \Bigg{|} < \delta, \enskip \text{for all} \enskip f \in C_b(Q) \bigg{\}}.
\end{align*}
where $C_b(Q) = \{f: Q \to \mathbb{R} \enskip | \enskip f \enskip \text{is bounded and continuous} \}$.
Then $B_\delta(P_0)$ is open in $\mathcal{P}(Q)$, and its complement $(B_\delta(P_0))^C$ is compact, since $\mathcal{P}(Q)$ is compact. Also, by assumption \textbf{A5}, there exists $\epsilon > 0$ such that $J_0(P) - J_0(P_0) > 2 \epsilon$ for all $P \in (B_\delta(P_0))^C$.\\
Also, by Theorem \ref{thm:consis}, there exists $N_4$ such that for $n \geq N_4$,
\begin{align*}
    \Bigg{|}\frac{1}{m n} J_{n,m}(\bm{Y};P)(\omega) - J_0(P) \Bigg{|} < \frac{\epsilon}{2}
\end{align*}
for all $P \in \mathcal{P}(Q)$.
Then for $n \geq N_4$ and $P \in (B_\delta(P_0))^C$,
\begin{align*}
    & \frac{1}{m n} J_{n,m}(\bm{Y};P) - \frac{1}{m n} J_{n,m}(\bm{Y};P_0) \\
    &= \frac{1}{m n} J_{n,m}(\bm{Y};P) - \frac{1}{m n} J_{n,m}(\bm{Y};P_0) + J_0(P) - J_0(P) + J_0(P_0) - J_0(P_0) \\
    &= \Bigg{(} \frac{1}{m n} J_{n,m}(\bm{Y};P) - J_0(P) \Bigg{)} + \Bigg{(} J_0(P) - J_0(P_0) \Bigg{)} + \Bigg{(} J_0(P_0) - \frac{1}{m n} J_{n,m}(\bm{Y};P_0) \Bigg{)}\\
    &> -\frac{\epsilon}{2} + 2 \epsilon - \frac{\epsilon}{2} = \epsilon > 0.
\end{align*}
So for all $P \in (B_\delta(P_0))^C$, we get
\begin{align*}
    J_{n,m}(\bm{Y};P) > J_{n,m}(\bm{Y};P_0).
\end{align*}
However, by definition of $P_{n,m}$, we have 
\begin{align*}
    J_{n,m}(\bm{Y};P_{n,m}) \leq J_{n,m}(\bm{Y};P_0).
\end{align*}
Therefore, $P_{n,m} \in B_\delta(P_0)$ for all $n \geq N_4$, and since $\delta$ was arbitrary, we get
\begin{align*}
    \rho(P_{n,m},P_0) \to 0.
\end{align*}
\qed
\end{proof}

\section{Approximation and Convergence}
\label{sec:4}

After establishing the consistency of the estimators $P_{n,m}$, followed by the estimates $\hat{P}_{n,m}$, using the realizations, defined as follows,
\begin{align*}
    \hat{P}_{n,m} = \arg \min_{P \in \mathcal{P}(Q)} J_{n,m}(\pmb{\mathscr{y}};P) = \arg \min_{P \in \mathcal{P}(Q)} \sum_{i=1}^m \sum_{k=1}^{n} (\mathscr{y}_{k,i} - \bar{y}_{k,i}(P))^2,
\end{align*}
we need to establish computational convergence since $P_{n,m}$ or $\hat{P}_{n,m}$ cannot be computed, and should be approximated numerically by $\hat{P}_{n,m,M}^N$, where
\begin{align*}
        \hat{P}_{n,m,M}^N = \arg \min_{P \in \mathcal{P}_M(Q)} J_{n,m}^N(\pmb{\mathscr{y}},P) = \arg \min_{P \in \mathcal{P}_M(Q)} \sum_{i=1}^m \sum_{k=1}^{n} (\mathscr{y}_{k,i} - \bar{y}^N_{k,i}(P))^2.
\end{align*}

The following Corollary, proved in \cite{Banks:2012}, is needed for the proof of Theorem \ref{Thm:compconv}.

\begin{corollary}
    Assume $(Q,d)$ is separable. Then $(Q,d)$ is compact if and only if $(\mathcal{P}(Q),\rho)$ is compact.
    \label{Cor:comp}
\end{corollary}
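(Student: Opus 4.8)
The plan is to prove the two implications separately, using throughout that $(Q,d)$ and $(\mathcal{P}(Q),\rho)$ are both metric spaces, so that for each of them compactness is equivalent to sequential compactness and also to being totally bounded together with complete. I would rely on the structural facts recorded earlier in the excerpt: that $\bm{q}\mapsto\delta_{\bm{q}}$ satisfies $\rho(\delta_{\bm{q}_1},\delta_{\bm{q}_2})=\min\{d(\bm{q}_1,\bm{q}_2),1\}$, that (since $Q$ is separable) $\{\bm{q}_j\}$ is Cauchy in $(Q,d)$ if and only if $\{\delta_{\bm{q}_j}\}$ is Cauchy in $(\mathcal{P}(Q),\rho)$, that $(Q,d)$ is complete if and only if $(\mathcal{P}(Q),\rho)$ is complete, that $\tilde{\mathcal{P}}_d(Q)$ is $\rho$-dense in $\mathcal{P}(Q)$, and that $\rho$ metrizes weak convergence of probability measures.

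For the implication $(\mathcal{P}(Q),\rho)$ compact $\implies$ $(Q,d)$ compact, I would argue by sequential compactness. Given a sequence $\{\bm{q}_n\}\subset Q$, the sequence $\{\delta_{\bm{q}_n}\}\subset\mathcal{P}(Q)$ has a $\rho$-convergent, hence $\rho$-Cauchy, subsequence $\{\delta_{\bm{q}_{n_k}}\}$ by compactness of $(\mathcal{P}(Q),\rho)$; by the Cauchy-equivalence above, $\{\bm{q}_{n_k}\}$ is $d$-Cauchy. A compact metric space is complete, so $(\mathcal{P}(Q),\rho)$ is complete, hence so is $(Q,d)$, and therefore $\{\bm{q}_{n_k}\}$ converges in $Q$. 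Thus $(Q,d)$ is sequentially compact, hence compact. (Equivalently: total boundedness of $\mathcal{P}(Q)$ passes to the subset $D$ of Dirac measures, hence to $(Q,\min\{d,1\})$ and so to $(Q,d)$, while completeness passes as above.)

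For the converse, $(Q,d)$ compact $\implies$ $(\mathcal{P}(Q),\rho)$ compact, I would use the functional-analytic route. When $Q$ is compact metric, $C(Q)=C_b(Q)$ is a separable Banach space, and $P\mapsto\big(f\mapsto\int_Q f\,dP\big)$ identifies $\mathcal{P}(Q)$ with a subset of the closed unit ball of $C(Q)^*$ that is weak-$*$ closed, since positivity of the functional and the normalization $\int_Q 1\,dP=1$ are weak-$*$ closed conditions. By Banach--Alaoglu this ball is weak-$*$ compact, and it is metrizable in the weak-$*$ topology because $C(Q)$ is separable; hence $\mathcal{P}(Q)$, being a weak-$*$ closed subset of a weak-$*$ compact metrizable set, is itself weak-$*$ compact. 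Since $\rho$ metrizes weak convergence of probability measures, which on a compact $Q$ is exactly weak-$*$ convergence (testing against all of $C(Q)$), the $\rho$-topology and the relative weak-$*$ topology on $\mathcal{P}(Q)$ have the same convergent sequences and, both being metrizable, coincide; therefore $(\mathcal{P}(Q),\rho)$ is compact. A self-contained alternative avoiding duality: given $\epsilon>0$, cover $Q$ by finitely many $\epsilon$-balls centered at $\bm{q}_1,\dots,\bm{q}_r$, approximate an arbitrary $P\in\mathcal{P}(Q)$ within $\rho$-distance $\epsilon$ by a measure supported on $\{\bm{q}_1,\dots,\bm{q}_r\}$ (with rational weights, so it lies in $\tilde{\mathcal{P}}_d(Q)$), and use that measures on a fixed finite set are totally bounded in $\rho$; this makes $(\mathcal{P}(Q),\rho)$ totally bounded, and with its completeness (transferred from $Q$) it is compact.

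The main obstacle is this converse direction. In the duality approach everything hinges on identifying the $\rho$-topology with the relative weak-$*$ topology — that is, on $\rho(P_n,P)\to0 \iff \int_Q f\,dP_n\to\int_Q f\,dP$ for all $f\in C(Q)$ — which is precisely the quoted fact that the Prohorov metric metrizes weak convergence, so the real content is pushed into that cited result. In the elementary approach the delicate points are making the ball-by-ball approximation uniform over all of $\mathcal{P}(Q)$ and checking that the approximating measure can be chosen with rational weights on the finite net without spoiling the $\rho$-estimate; both rely on the quantitative comparison of $d$ and $\rho$ through the $\epsilon$-neighborhoods $E^\epsilon$ in the definition of $\rho$, namely that redistributing mass a distance less than $\epsilon$ perturbs a measure by at most $\epsilon$ in the Prohorov metric.
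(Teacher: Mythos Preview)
Your proposal is correct, but you should be aware that the paper does not actually prove this corollary at all: it simply states the result and cites \cite{Banks:2012} for the proof, so there is no ``paper's own proof'' against which to compare. Your argument therefore supplies what the paper defers to a reference.

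Both of your approaches to the harder direction $(Q,d)$ compact $\implies (\mathcal{P}(Q),\rho)$ compact are standard and sound. The Banach--Alaoglu route is the cleanest and exploits exactly the facts the paper has already recorded (in particular that $\rho$ metrizes weak convergence). The elementary total-boundedness route is also fine; the point you flag as delicate---that pushing mass a $d$-distance less than $\epsilon$ moves a measure at most $\epsilon$ in $\rho$---follows directly from the definition of $\rho$ via the inclusion $E\subset (E')^\epsilon$ whenever $E'$ contains all the shifted support, so there is no hidden gap there. The rational-weight perturbation on a fixed finite support is likewise harmless, since the total-variation distance (which dominates $\rho$) can be made arbitrarily small. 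Either version would serve as a complete proof where the paper gives none.
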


Consider the following assumptions,
\begin{itemize}[leftmargin=0.25in]
    \item[\textbf{B1.}] For all $n$, $m$, and $N$, the map $P \mapsto J_{n,m}^N(\pmb{\mathscr{y}};P)$ is a continuous map.
    
    \item[\textbf{B2.}] Let $(\mathcal{P}(Q),\rho)$ be the space of probability measures $\mathcal{P}(Q)$ with the Prohorov metric $\rho$. For any sequence of probability measures $P_M$, such that $\rho(P_M,P) \to 0$ in $\mathcal{P}(Q)$, $\bar{y}^N_{k,i}(P_M) \to \bar{y}_{k,i}(P)$ as $N,M \to \infty$ for $k=1,\dots,n, \enskip i=1,\dots,m$.
    
    \item[\textbf{B3.}] For all $P \in \mathcal{P}(Q)$ and for $k=1,\dots,n, \enskip i=1,\dots,m$, $\bar{y}_{k,i}(P)$ and $\bar{y}^N_{k,i}(P)$ are uniformly bounded.
\end{itemize}

\begin{theorem}
    Let $(Q,d)$ be a compact and separable metric space, and $(\mathcal{P}(Q),\rho)$ be the space of probability measures $\mathcal{P}(Q)$ with the Prohorov metric $\rho$. Let
    \begin{align*}
        \mathcal{P}_M(Q) = \{P \in \tilde{\mathcal{P}}_d(Q) \enskip | \enskip P=\sum_{j=1}^{M} p_j \delta_{\bm{q}_j}, \bm{q}_j \in \{\bm{q}_j\}_{j=1}^{M}\} \subset \tilde{\mathcal{P}}_d(Q).
    \end{align*}
    Then, under assumptions \emph{\textbf{B1-B3}}, there exists minimizers $\hat{P}_{n,m,M}^N$ such that
    \begin{align*}
        \hat{P}_{n,m,M}^N = \arg \min_{P \in \mathcal{P}_M(Q)} J_{n,m}^N(\pmb{\mathscr{y}};P) = \arg \min_{P \in \mathcal{P}_M(Q)} \sum_{i=1}^m \sum_{k=1}^{n} (\mathscr{y}_{k,i} - \bar{y}^N_{k,i}(P))^2.
    \end{align*}
    In addition, for a fixed $n$ and $m$, there exists a subsequence of $\hat{P}_{n,m,M}^N$ that converges to some $\hat{P}_{n,m}^*$ as $M,N \to \infty$, and $\hat{P}_{n,m}^*$ satisfies
    \begin{align*}
        \hat{P}_{n,m}^* = \arg \min_{P \in \mathcal{P}(Q)} J_{n,m}(\pmb{\mathscr{y}};P) = \arg \min_{P \in \mathcal{P}(Q)} \sum_{i=1}^m \sum_{k=1}^{n} (\mathscr{y}_{k,i} - \bar{y}_{k,i}(P))^2.
    \end{align*}
    \label{Thm:compconv}
\end{theorem}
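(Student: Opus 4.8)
The plan is to establish the three assertions in sequence: existence of the finite-dimensional minimizers, extraction of a convergent subsequence via compactness of $(\mathcal{P}(Q),\rho)$, and identification of the limit as a minimizer of $J_{n,m}(\pmb{\mathscr{y}};\cdot)$ over all of $\mathcal{P}(Q)$, using the continuity/approximation hypotheses \textbf{B1}--\textbf{B3} together with the density of $\bigcup_{M}\mathcal{P}_M(Q)=\tilde{\mathcal{P}}_d(Q)$ in $\mathcal{P}(Q)$.

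For existence, I would parametrize $P=\sum_{j=1}^{M}p_j\delta_{\bm{q}_j}\in\mathcal{P}_M(Q)$ by its weight vector and note that the map from the (closed, real) $M$-simplex into $\mathcal{P}(Q)$, $(p_1,\dots,p_M)\mapsto\sum_j p_j\delta_{\bm{q}_j}$, is weak$^*$-continuous, since $\int_Q f\,dP=\sum_{j}p_j f(\bm{q}_j)$ depends continuously on $(p_1,\dots,p_M)$ for every $f\in C_b(Q)$; hence (the closure of) $\mathcal{P}_M(Q)$ is a compact subset of $\mathcal{P}(Q)$. By \textbf{B1}, $P\mapsto J_{n,m}^N(\pmb{\mathscr{y}};P)$ is continuous on this compact set and therefore attains its infimum, which gives the minimizer $\hat{P}_{n,m,M}^N$. (The only wrinkle is whether the minimizing weights can be taken rational as in the literal definition of $\mathcal{P}_M(Q)$; this is immaterial for what follows, so I would simply work with real weights, i.e.\ with $\overline{\mathcal{P}_M(Q)}$.)

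For the convergent subsequence, Corollary \ref{Cor:comp} (with $(Q,d)$ compact and separable) gives that $(\mathcal{P}(Q),\rho)$ is compact. Taking any sequence of indices with $M_\ell\to\infty$ and $N_\ell\to\infty$, the measures $\hat{P}_{n,m,M_\ell}^{N_\ell}$ lie in the compact metric space $\mathcal{P}(Q)$, so along a sub-subsequence (not relabeled) they converge in $\rho$ to some $\hat{P}_{n,m}^*\in\mathcal{P}(Q)$. It then remains to show $\hat{P}_{n,m}^*$ minimizes $J_{n,m}(\pmb{\mathscr{y}};\cdot)$ over $\mathcal{P}(Q)$. Fix an arbitrary $P\in\mathcal{P}(Q)$. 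Since the $\mathcal{P}_M(Q)$ are nested and their union $\tilde{\mathcal{P}}_d(Q)$ is dense in $\mathcal{P}(Q)$, I can pick $P_M\in\mathcal{P}_M(Q)$ with $\rho(P_M,P)\to 0$. For each $\ell$, $P_{M_\ell}\in\mathcal{P}_{M_\ell}(Q)$ and $\hat{P}_{n,m,M_\ell}^{N_\ell}$ minimizes $J_{n,m}^{N_\ell}(\pmb{\mathscr{y}};\cdot)$ over $\mathcal{P}_{M_\ell}(Q)$, so $J_{n,m}^{N_\ell}(\pmb{\mathscr{y}};\hat{P}_{n,m,M_\ell}^{N_\ell})\le J_{n,m}^{N_\ell}(\pmb{\mathscr{y}};P_{M_\ell})$. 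Now pass to the limit $\ell\to\infty$: on the left, $\rho(\hat{P}_{n,m,M_\ell}^{N_\ell},\hat{P}_{n,m}^*)\to0$ and $N_\ell\to\infty$, so \textbf{B2} yields $\bar{y}_{k,i}^{N_\ell}(\hat{P}_{n,m,M_\ell}^{N_\ell})\to\bar{y}_{k,i}(\hat{P}_{n,m}^*)$ for the finitely many $(k,i)$, whence (using \textbf{B3} to keep everything bounded) $J_{n,m}^{N_\ell}(\pmb{\mathscr{y}};\hat{P}_{n,m,M_\ell}^{N_\ell})\to J_{n,m}(\pmb{\mathscr{y}};\hat{P}_{n,m}^*)$; on the right, $\rho(P_{M_\ell},P)\to0$ and $N_\ell\to\infty$ give similarly $J_{n,m}^{N_\ell}(\pmb{\mathscr{y}};P_{M_\ell})\to J_{n,m}(\pmb{\mathscr{y}};P)$. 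Hence $J_{n,m}(\pmb{\mathscr{y}};\hat{P}_{n,m}^*)\le J_{n,m}(\pmb{\mathscr{y}};P)$ for every $P\in\mathcal{P}(Q)$, as required.

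The step I expect to be the main obstacle is the left-hand limit above: because the argument $\hat{P}_{n,m,M_\ell}^{N_\ell}$ of the approximate output functional itself varies with the discretization level $N_\ell$, neither continuity of $J_{n,m}^{N}(\pmb{\mathscr{y}};\cdot)$ nor of the limiting functional $J_{n,m}(\pmb{\mathscr{y}};\cdot)$ is by itself enough — this joint convergence is exactly what \textbf{B2} is designed for, and getting the indexing right (a single sequence $(M_\ell,N_\ell)$ along which all the pieces converge) is where the care is needed. A secondary point is justifying the recovery sequence $P_M\in\mathcal{P}_M(Q)$ with $\rho(P_M,P)\to0$, which relies on the nestedness of the $\mathcal{P}_M(Q)$ and the density of $\tilde{\mathcal{P}}_d(Q)$ in $\mathcal{P}(Q)$ established in \cite{Banks:2012}.
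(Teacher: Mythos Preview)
Your proposal is correct and follows essentially the same approach as the paper: existence via continuity (\textbf{B1}) and compactness, extraction of a convergent subsequence via compactness of $(\mathcal{P}(Q),\rho)$ (Corollary~\ref{Cor:comp}), and then passing to the limit in the minimizer inequality $J_{n,m}^{N}(\pmb{\mathscr{y}};\hat{P}_{n,m,M}^{N})\le J_{n,m}^{N}(\pmb{\mathscr{y}};P_M)$ using \textbf{B2}--\textbf{B3} and the density of $\tilde{\mathcal{P}}_d(Q)$. If anything, your write-up is more careful than the paper's in two places---the rational-weights technicality in the definition of $\mathcal{P}_M(Q)$, and the explicit use of \textbf{B2} for the left-hand limit $J_{n,m}^{N_\ell}(\pmb{\mathscr{y}};\hat{P}_{n,m,M_\ell}^{N_\ell})\to J_{n,m}(\pmb{\mathscr{y}};\hat{P}_{n,m}^*)$, which the paper leaves implicit.
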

\begin{proof}
For any fixed $n$ and $m$, by continuity of the map $P \mapsto J_{n,m}^N(\pmb{\mathscr{y}};P)$ for all $n$, $m$, and $N$, per assumption \textbf{B1}, and by Corollary \ref{Cor:comp}, we can conclude that $\hat{P}_{n,m,M}^N$ exist.

Since $\tilde{\mathcal{P}}_d(Q)$ is dense in $\mathcal{P}(Q)$, for $M=1,\dots,\infty$, we can construct a sequence of measures $P_M \in \mathcal{P}_M(Q) \subset \mathcal{P}(Q)$, such that $\rho(P_M,P) \to 0$ in $\mathcal{P}(Q)$, then by assumption \textbf{B2} and \textbf{B3}, for some constant $c$, we have
\begin{align*}
   \bigg{|} J_{n,m}^N(\pmb{\mathscr{y}};P_M) - J_{n,m}(\pmb{\mathscr{y}};P) \bigg{|} &= \Bigg{|} \sum_{i=1}^m \sum_{k=1}^{n} (\mathscr{y}_{k,i} - \bar{y}^N_{k,i}(P_M))^2 - \sum_{i=1}^m \sum_{k=1}^{n} (\mathscr{y}_{k,i} - \bar{y}_{k,i}(P))^2 \Bigg{|}\\
   &= \Bigg{|} \sum_{i=1}^m \sum_{k=1}^{n} \bigg{[} -2\mathscr{y}_{k,i} \bar{y}^N_{k,i}(P_M) + (\bar{y}^N_{k,i}(P_M))^2 + 2\mathscr{y}_{k,i} \bar{y}_{k,i}(P)\\
   & \enskip \enskip \enskip \enskip \enskip \enskip \enskip \enskip \enskip \enskip - (\bar{y}_{k,i}(P))^2 - \bar{y}^N_{k,i}(P_M) \bar{y}^N_{k,i}(P) + \bar{y}^N_{k,i}(P_M) \bar{y}^N_{k,i}(P) \bigg{]} \Bigg{|}\\
   &< \sum_{i=1}^m \sum_{k=1}^{n} \Bigg{|} 2\mathscr{y}_{k,i} - \bar{y}^N_{k,i}(P_M) - \bar{y}_{k,i}(P) \Bigg{|} \Bigg{|} \bar{y}_{k,i}(P) - \bar{y}^N_{k,i}(P_M) \Bigg{|}\\
   &< c \sum_{i=1}^m \sum_{k=1}^{n} \Bigg{|} \bar{y}_{k,i}(P) - \bar{y}^N_{k,i}(P_M) \Bigg{|} \to 0.
\end{align*}
So we get, $J_{n,m}^N(\pmb{\mathscr{y}};P_M) \to J_{n,m}(\pmb{\mathscr{y}};P)$ as $N,M \to \infty$.\\
Also, by definition, we have
\begin{align}
    J_{n,m}^N(\pmb{\mathscr{y}};\hat{P}_{n,m,M}^N) \leq J_{n,m}^N(\pmb{\mathscr{y}};P_M)
    \label{inequality}
\end{align}
for each $n$, $m$, and $N$, and for all $P_M \in \mathcal{P}_M(Q)$. \\
And, since $\mathcal{P}(Q)$ is compact, there exists a subsequence of $\hat{P}_{n,m,M}^N$ that converges to some $\hat{P}_{n,m}^*$ as $M,N \to \infty$.
So by taking the limit as $M,N \to \infty$ of (\ref{inequality}), for all $P \in \mathcal{P}(Q)$, we get
\begin{align*}
    J_{n,m}(\pmb{\mathscr{y}};\hat{P}_{n,m}^*) \leq J_{n,m}(\pmb{\mathscr{y}};P);
\end{align*}
therefore,
\begin{align*}
    \hat{P}_{n,m}^* = \arg \min_{P \in \mathcal{P}(Q)} J_{n,m}(\pmb{\mathscr{y}};P) = \arg \min_{P \in \mathcal{P}(Q)} \sum_{i=1}^m \sum_{k=1}^{n} (\mathscr{y}_{k,i} - \bar{y}_{k,i}(P))^2.
\end{align*}
\qed
\end{proof}

In practice, we fix sufficiently large values of $M$ and $N$ in order to achieve the preferred accuracy level. The choice of $N$ depends on the numerical framework used for approximation of $\bar{y}_{k,i}(P)$. In order to achieve the level of accuracy that is desired, we select $M$ nodes, $\{\bm{q}_j\}_{j=1}^{M}$, and determine the values of the weights $p_j$ at each node $\bm{q}_j$ by reducing the optimization problem to a standard constrained estimation problem over Euclidean $M$-space as follows
\begin{align*}
    \hat{P}_{n,m,M}^N &= \arg \min_{P \in \mathcal{P}_M(Q)} J_{n,m}^N(\pmb{\mathscr{y}};P) \\
    &= \arg \min_{P \in \mathcal{P}_M(Q)} \sum_{i=1}^m \sum_{k=1}^{n} (\mathscr{y}_{k,i} - \bar{y}^N_{k,i}(P))^2\\
    &= \arg \min_{P \in \mathcal{P}_M(Q)} \sum_{i=1}^m \sum_{k=1}^{n} \bigg{(} \mathscr{y}_{k,i} - E[h_k( x_{k,i}^N, \phi_{0,i}^N, u_{k,i}; \bm{q})] \bigg{)}^2 \\
    &= \arg \min_{P \in \mathcal{P}_M(Q)} \sum_{i=1}^m \sum_{k=1}^{n} \bigg{(}\mathscr{y}_{k,i} - \int_Q h_k( x_{k,i}^N, \phi_{0,i}^N, u_{k,i}; \bm{q}) dP\bigg{)}^2\\
    &= \arg \min_{\tilde{\bm{p}} \in \widetilde{\mathbb{R}^M}} \sum_{i=1}^m \sum_{k=1}^{n} \bigg{(}\mathscr{y}_{k,i} - \sum_{j=1}^M h_k(x_{k,i}^N, \phi_{0,i}^N, u_{k,i}; \bm{q}_j) p_j\bigg{)}^2\\
    &= \arg \min_{\tilde{\bm{p}} \in \widetilde{\mathbb{R}^M}} \big{|}\big{|}\pmb{\mathscr{y}}- \bm{H} \bm{D}_{\tilde{\bm{p}}}\big{|}\big{|}^2_F,
\end{align*}
where $\tilde{\bm{p}} = (p_1,\dots,p_M) \in \widetilde{\mathbb{R}^M} = \{\tilde{\bm{p}} \enskip | \enskip p_j \in \mathbb{R}^+ , \sum_{j=1}^M p_j = 1\}$. And $\parallel.\parallel_F$ is the Frobenius norm, $\pmb{\mathscr{y}}$ is the $n \times m$ matrix of the realizations $(\{\mathscr{y}_{k,i}\}_{k=1}^{n})_{i=1}^{m}$, $\bm{H}$ is the $n \times m M$ matrix of $$\bm{H} = \bigg{[}(\{h_k(x_{k,i}^N, \phi_{0,i}^N, u_{k,i}; \bm{q}_1)\}_{k=1}^{n})_{i=1}^{m} \enskip \dots \enskip (\{h_k(x_{k,i}^N, \phi_{0,i}^N, u_{k,i}; \bm{q}_M)\}_{k=1}^{n})_{i=1}^{m} \bigg{]},$$ and $\bm{D}_{\tilde{\bm{p}}}$ is $m M \times m$ matrix consisting of diagonal matrices as follows, $$\bm{D}_{\tilde{\bm{p}}} = \bigg{[} diag(p_1) \enskip diag(p_2) \enskip \dots \enskip diag(p_M) \bigg{]} ^T,$$ where $diag(a)$ is an $m \times m$ diagonal matrix with $a$ on the diagonals.

As $M$ increases, the optimization problem becomes ill-conditioned. As a result, $M$ must be chosen large enough such that the desired accuracy of computational approximation is attained, yet large numerical errors in solving the optimization problem, due to being poorly conditioned, are avoided.

\section{Abstract Parabolic Systems}
\label{sec:5}

Let $V$ and $H$ be Hilbert spaces with $V$ continuously and densely embedded in $H$, i.e. $V \hookrightarrow H$. Let $<.,.>_H$ and $|.|_H$ denote the inner product and norm on $H$, respectively, and let $\parallel.\parallel_V$ denote the norm on $V$. We identify $H$ with its dual $H^*$ to obtain the Gelfand triple $V \hookrightarrow H \hookrightarrow V^*$ and let $<\cdot,\cdot>_{V^*,V}$ denote the duality pairing between $V^*$ and $V$ induced by the dense and continuous embedding of $V$ in $H$ and $H$ in $V^*$. For each $N=1,2,\dots$, let $V^N$ be a finite-dimensional subspace of $V$, and for each $\bm{q} \in Q$, let $a(\bm{q};.,.) : V \times V \to \mathbb{C}$ be a sesquilinear form satisfying the conditions given below.

\begin{itemize}[leftmargin=0.25in]
    \item[\textbf{C1.}] \textbf{(Boundedness)} There exists a constant $\alpha_0$ such that for all $\psi_1,\psi_2 \in V$, we have
    \begin{align*}
        |a(\bm{q};\psi_1,\psi_2)| \leq \alpha_0 \parallel \psi_1 \parallel_V \parallel \psi_2 \parallel_V.
    \end{align*}
    \item[\textbf{C2.}] \textbf{(Coercivity)} There exists $\lambda_0 \in \mathbb{R}$ and $\mu_0>0$ such that for all $\psi \in V$, we have
    \begin{align*}
        a(\bm{q};\psi,\psi) + \lambda_0 |\psi|_H^2 \geq \mu_0 \parallel \psi \parallel_V^2.
    \end{align*}
    \item[\textbf{C3.}] \textbf{(Continuity)} For all $\psi_1,\psi_2 \in V$ and $\bm{q}, \tilde{\bm{q}} \in Q$, we have $$|a(\bm{q},\psi_1,\psi_2)-a(\tilde{\bm{q}},\psi_1,\psi_2)| \leq d(\bm{q},\tilde{\bm{q}}) \parallel \psi_1 \parallel_V \parallel \psi_2 \parallel_V.$$
    \item[\textbf{C4.}] \textbf{(Approximation)} For every $x \in V$, there exists $x^N \in V^N$ such that $\parallel x-x^N \parallel_V \to 0$ as $N \to \infty$.
\end{itemize}

Consider the following abstract parabolic system in weak form,
\begin{align}
    <\dot{x},\psi>_{V^*,V} + a(\bm{q};x,\psi) &= <\bm{B}(\bm{q})u,\psi>_{V^*,V}, \quad \psi \in V, \nonumber\\
    x(0) &= x_0, \label{eq.16}\\
    y(t) &= \bm{C}(\bm{q}) x(t) \nonumber,
\end{align}
where $x_0 \in H$, $u \in L^2([0,T],\mathbb{R}^{\nu})$ is the input to the system, $y \in L^2([0,T],\mathbb{R})$ is the output of the system, $\bm{B}(\bm{q}): \mathbb{R}^{\nu} \to V^*$, and $\bm{C}(\bm{q}): V \to \mathbb{R}$ are bounded linear operators.  Using standard variational arguments (see for example, \cite{Lions:1971}) it can be shown that the system (\ref{eq.16}) has a unique solution in $$\big{\{} \psi \enskip | \enskip \psi \in L^2([0,T],V), \enskip \dot{\psi} \in L^2([0,T],V^*) \big{\}} \subset C([0,T],H).$$  However, to convert the system given in (\ref{eq.16}) to discrete-time and to argue convergence of our finite-dimensional approximations, we will rely on a linear semigroup approach. 

Under the assumptions \textbf{C1} and \textbf{C2}, $a(\bm{q};.,.)$ defines a bounded linear operator $\bm{A}(\bm{q}):V \to V^*$ such that for $\psi_1, \psi_2 \in V$,
\begin{align*}
    <\bm{A}(\bm{q})\psi_1,\psi_2>_{V^*,V} = -a(\bm{q};\psi_1,\psi_2).
\end{align*}
As in \cite{Banks:1997,Banks:1989,Tanabe:1979}, the operator $\bm{A}(\bm{q})$ is regularly dissipative and restricted to $Dom(\bm{A}(\bm{q})) = \{\psi \in V \enskip | \enskip \bm{A}(\bm{q})\psi \in H\}$, which is the infinitesimal generator of a holomorphic or analytic semigroup, $\{e^{\bm{A}(\bm{q})t} \enskip | \enskip t \geq 0\}$, of bounded linear operators on $H$. In addition, this semigroup can be extended and restricted to be a holomorphic semigroup on $V^*$ and $V$, respectively (see details in \cite{Banks:1997,Tanabe:1979}). It follows that the system (\ref{eq.16}) can be written in a state space form as the evolution system with time invariant operators $\bm{A}(\bm{q})$, $\bm{B}(\bm{q})$, and $\bm{C}(\bm{q})$, as follows
\begin{align}
    \dot{x}(t) &= \bm{A}(\bm{q})x(t) + \bm{B}(\bm{q})u(t),\nonumber\\
    x(0) & = x_0, \label{eq.17}\\
    y(t) &= \bm{C}(\bm{q}) x(t) \nonumber.
\end{align}
with the mild solution of (\ref{eq.17}) given by the variation of constants formula as
\begin{align}
    x(t;\bm{q}) &= e^{\bm{A}(\bm{q}) t} x_0 + \int_0^t e^{\bm{A}(\bm{q})(t-s)} \bm{B}(\bm{q}) u(s) ds, \enskip t \geq 0, \label{eq.18}\\
    y(t;\bm{q}) &= \bm{C}(\bm{q}) x(t;\bm{q}) \nonumber.
\end{align}

Let $\tau > 0$ denote the length of the sampling interval and consider only zero-order hold inputs of the form $u(t) = u_{k-1}, \enskip t \in [(k-1)\tau,k\tau), \enskip k = 1,2,\dots$, and let $x_k = x(k\tau) \enskip \text{and} \enskip y_k = y(k\tau), \enskip k = 1,2,\dots$. Applying (\ref{eq.18}) on each sub-interval $[(k-1)\tau,k\tau]$, \enskip $k=1,2,\dots$, we obtain the discrete-time dynamical system given by
\begin{align}
    x_{k} &= \hat{\bm{A}}(\bm{q})x_{k-1} + \hat{\bm{B}}(\bm{q})u_{k-1}, \enskip k=1,2,\dots,\label{eq.19}\\
    y_{k} &= \hat{\bm{C}}(\bm{q}) x_k, \enskip k=1,2,\dots, \label{eq.20}
\end{align}
where $x_0 \in V$, $\hat{\bm{A}}(\bm{q}) = e^{\bm{A}(\bm{q}) \tau}$, $\hat{\bm{B}}(\bm{q}) = \int_0^\tau e^{\bm{A}(\bm{q}) s} \bm{B}(\bm{q}) ds$, and $\hat{\bm{C}}(\bm{q}) = \bm{C}(\bm{q})$.

We obtain a sequence of finite-dimensional systems approximating the discrete-time system given in (\ref{eq.19}) and (\ref{eq.20}) via Galerkin approximation (see, for example, \cite{Banks:1984}). For each $N \in \mathbb{Z}^+$, recall that $V^N$ is a finite-dimensional subspace of $V$ assumed to satisfy the approximation assumption \textbf{C4}. For each $\bm{q} \in Q$, let $\bm{A}^N(\bm{q})$ be the linear operator on $V^N$ obtained by restricting the form $a(\bm{q};.,.)$ to $V^N \times V^N$, i.e. for $\psi_1^N,\psi_2^N \in V^N$,
\begin{align*}
    <\bm{A}^N(\bm{q})\psi_1^N,\psi_2^N>_{V^*,V} = -a(\bm{q};\psi_1^N,\psi_2^N).
\end{align*}
Let $\pi^N: H \to V^N$ denote the orthogonal projection of $H$ onto $V^N$ and set $\bm{B}^N(\bm{q}) = \pi^N \bm{B}(\bm{q})$, where in this definition $\pi^N$ is understood to be the natural extension of the projection operator $\pi^N$ to $V^*$ from its dense subspace $H$. Define $\hat{\bm{A}}^N(\bm{q}) = e^{\bm{A}^N(\bm{q}) \tau}$, $\hat{\bm{B}}^N(\bm{q}) = \int_0^\tau e^{\bm{A}^N(\bm{q}) s} \bm{B}^N(\bm{q}) ds$, and $\hat{\bm{C}}^N(\bm{q}) x_k = \hat{\bm{C}}(\bm{q})$.  We then consider the sequence of approximating discrete-time dynamical systems given by 
\begin{align}
    x_{k}^N &= \hat{\bm{A}}^N(\bm{q})x_{k-1}^N + \hat{\bm{B}}^N(\bm{q})u_{k-1}, \enskip k=1,2,\dots,\label{eq.21}\\
    y_{k}^N &= \hat{\bm{C}}^N(\bm{q}) x_k^N, \enskip k=1,2,\dots, \label{eq.22}
\end{align}
with $x_0^N = \pi^N x_0\in V^N$.

Then under assumptions \textbf{C1-C4} together with the assumption that $Q$ is compact and continuity assumptions on the operators $\bm{B}(\bm{q})$ and $\bm{C}(\bm{q})$, using the Trotter-Kato theorem (see, for example, \cite{Banks:1988}), it can be argued that $\lim_{N \rightarrow \infty}\parallel x_k^N - x_k\parallel_V = 0$ and $\lim_{N \rightarrow \infty} | y_k^N - y_k | = 0$ for each $x_0 \in V$, and uniformly in $\bm{q}$ for $\bm{q} \in Q$ and $k \in \{1,2,\dots,K\}$, for any fixed $K \in \mathbb{N}^+$, where $x_k^N$ and $y_k^N$ are given by (\ref{eq.21}) and (\ref{eq.22}), respectively and $x_k$ and $y_k$ are given by (\ref{eq.19}) and (\ref{eq.20}), respectively.

We next show that the assumptions \textbf{B1-B3} in section \ref{sec:4} are satisfied. To establish that assumption \textbf{B1} holds, it is enough to show that for each $N$, $k$, and $i$ and
any sequence of probability measures $P_M \subset \mathcal{P}(Q)$ with $\rho(P_M,P) \to 0$ for some $ P \in\mathcal{P}(Q)$, we have $|\bar{y}_{k,i}^N(P_M) - \bar{y}_{k,i}^N(P)| \rightarrow 0$ as $M \rightarrow \infty$. For each $i=1,2,\dots,m$ and each $k=0,1,2,\dots,n$, it is not difficult to argue (for example via the Trotter-Kato theorem from linear semigroup theory, see for example, \cite{Pazy:1983}) that the map $\bm{q} \mapsto y_{k,i}^N(\bm{q})$ is continuous from $Q$ into $\mathbb{R}$, where for each $i=1,2,\dots,m$ and $\bm{q} \in Q$, $y_{k,i}^N(\bm{q})$ is given by (\ref{eq.21}) and (\ref{eq.22}). Since $Q$ was assumed to be compact, it follows that $y_{k,i}^N \in C_b(Q)$ and therefore from the definition of the Prohorov metric given in section \ref{sec:2}, that if $\rho(P_M,P) \to 0$ as $M \rightarrow \infty$, then $\bar{y}_{k,i}^N(P_M) \to \bar{y}_{k,i}^N(P)$ in $\mathbb{R}$ as $M \to \infty$, assumption \textbf{B1} is satisfied.

In order to show that the assumption \textbf{B2} is satisfied, from arguments given previously in this section we have that $\lim_{N \rightarrow \infty}\parallel x_{k,i}^N - x_{k,i}\parallel_V = 0$ and $\lim_{N \rightarrow \infty} | y_{k,i}^N - y_{k,i} | = 0$ for each $x_{0,i} \in V$, and uniformly in $\bm{q}$ for $\bm{q} \in Q$, $k=1,\dots,n, \enskip i=1,\dots,m$. We want to show that for any sequence of probability measures $P_M$, such that $\rho(P_M,P) \to 0$ in $\mathcal{P}(Q)$, and for $k=1,\dots,n, \enskip i=1,\dots,m$, as $N, M \to \infty$, we have 
\begin{align*}
    \bar{y}^N_{k,i}(P_M) = \int_Q y_{k,i}^N dP_M \to \bar{y}_{k,i}(P) = \int_Q y_{k,i} dP.
    \end{align*}
Let $\epsilon > 0$ be given and choose $N_0$ such that for $N \geq N_0$, we have $|y_{k,i}^N - y_{k,i}| < \epsilon/2$ on all of $Q$. Then for every $M$, since $P_M$ is a probability measure, we have $\int_Q |y_{k,i}^N - y_{k,i}| dP_M < \epsilon/2$. It follows that
\begin{align*}
    \Big{|} \int_Q y_{k,i}^N dP_M - \int_Q y_{k,i} dP \Big{|} 
     \leq \int_Q \big{|} y_{k,i}^N - y_{k,i} \big{|} dP_M + \Big{|} \int_Q y_{k,i} dP_M - \int_Q y_{k,i} dP \Big{|}
     \leq \frac{\epsilon}{2} + \frac{\epsilon}{2} = \epsilon,
\end{align*}
where the second term is made less than $\epsilon/2$ by using the fact that $\rho(P_M,P) \to 0$ and choosing $M$ sufficiently large. This establishes assumption \textbf{B2}.

Finally, the same reasoning used above to argue that for each $N$, $y_{k,i}^N(\bm{q})$ are uniformly bounded for $\bm{q} \in Q$ applies as well to $y_{k,i}(\bm{q})$. This together with the fact that $|y_{k,i}^N(\bm{q})-y_{k,i}(\bm{q})| \to 0$ uniformly in $\bm{q}$ for $\bm{q} \in Q$ are sufficient to conclude that assumption \textbf{B3} holds. 

\section{Application to the Transdermal Transport of Alcohol}
\label{sec:6}

In order to apply the results established in section \ref{sec:5} to the alcohol problem model, we must first rewrite the system (\ref{eq.1})-(\ref{eq.5}) in weak form, identify the feasible parameter space $Q$, the Hilbert spaces $H$ and $V$, the sesquilinear form $a(\bm{q};.,.) : V \times V \to \mathbb{C}$, and the operators $\bm{B}(\bm{q})$, and $\bm{C}(\bm{q})$. Then we must choose the approximating space $V^N$ and argue that the assumptions \textbf{C1-C4} are satisfied. 

The parameter space $Q$ is assumed to be a compact subset of $\mathbb{R}^+ \times \mathbb{R}^+$ with any $p$-metric denoted by $d_Q$. We set $V=H^1(0,1)$, $H=L^2(0,1)$, together with their standard inner products and norms, and therefore we have $V^*=H^{-1}(0,1)$. It is well-known that these three spaces form a Gelfand triple. For $\psi \in V$, integration by parts yields the weak form of (\ref{eq.1})-(\ref{eq.5}) to be
\begin{align*}
    <\dot{x}(t),\psi>_{V^*,V} + \int_0^1 q_1 \frac{\partial x}{\partial \eta}(t,\eta) \psi'(\eta) d\eta + x(t,0) \psi(t,0) = q_2 u(t) \psi(1).
\end{align*}
Consequently, for $\textbf{q} \in Q$, $u \in \mathbb{R}$, and $\varphi, \psi \in V$ we set
\begin{align*}
    a(\bm{q};\varphi,\psi) &= \int_0^1 q_1 \varphi'(\eta) \psi'(\eta) d\eta + \varphi(0) \psi(0),\\
    <\bm{B}(\bm{q})u,\psi>_{V^*,V} &= q_2 u\psi(1),\\
    \bm{C}(\bm{q})\psi&=\bm{C} \psi =\psi(0).
\end{align*}
Standard arguments involving the Sobolev Embedding Theorem (see, for example, \cite{Adams:2003}) can be used to argue that assumptions \textbf{C1-C3} are satisfied and clearly the operators $\bm{B}(\bm{q})$ and $\bm{C}(\bm{q})$ are continuous in the uniform operator topology with respect to $\bm{q} \in Q$. It follows from section \ref{sec:5} that in this case we have
\begin{align*}
    g_{k-1}(x_{k-1,i},u_{k-1,i};\bm{q}) &= g(x_{k-1,i},u_{k-1,i};\bm{q})\\
     &= \hat{\bm{A}}(\bm{q})x_{k-1,i} + \hat{\bm{B}}(\bm{q})u_{k-1,i}, \enskip k=1,2,\dots,n_i, \enskip i=1,\dots,m,\\
     h_k(x_{k,i}, \phi_{0,i}, u_{k,i}; \bm{q}) &= h(x_{k,i},u_{k,i};\bm{q})\\
     &= \hat{\bm{C}}(\bm{q}) x_{k,i}, \enskip k = 1,\dots,n_i, \enskip i = 1,\dots,m,
\end{align*}
where $\hat{\bm{A}}(\bm{q}) = e^{\bm{A}(\bm{q}) \tau}$, $\hat{\bm{B}}(\bm{q}) = \int_0^\tau e^{\bm{A}(\bm{q}) s} \bm{B}(\bm{q}) ds$, and $\hat{\bm{C}}(\bm{q}) = \bm{C}(\bm{q})$ with $\tau>0$ the length of the sampling interval.

For each $N=1,2,\dots$ let $V^N$ be the span of the standard linear splines (i.e. ``hat" or ``chapeau" functions) defined with respect to the uniform mesh $\{0,1/N,2/N,\dots,(N-1)/N,1\}$ on $[0,1]$. Then standard arguments for spline functions (see, for example, \cite{Schultz:1973}) can be used to argue that assumption \textbf{C4} is satisfied. If for each $i=1,2,\dots,m$, we then define $x_{k,i}^N$ and $y_{k,i}^N$ as in (\ref{eq.21}) and (\ref{eq.22}), the arguments at the end of section \ref{sec:5} yield that assumptions \textbf{B1-B3} are satisfied.

\subsection{Example 1: Estimation Based on Simulation Data}
\label{sec:6.1}

The transdermal transport of ethanol is modeled by a random partial differential equation with random parameter vector $\bm{q}=(q_1,q_2)$. We estimate the distribution of this parameter vector. To demonstrate the application of the approach developed in the previous sections, we simulated aggregate TAC data in MATLAB by assuming that the two parameters $q_1$ and $q_2$ are i.i.d. random variables from a $Beta(2,5)$ distribution and therefore that their joint cumulative distribution function (cdf) is the product of their marginal cdfs. Applying the results from the previous sections, we estimate the probability distribution of the parameter vector $\bm{q}=(q_1,q_2)$, and compare it to the ``true" distribution.

From equation (\ref{eq.6}), we have
\begin{align*}
    Y_{k,i} = \bar{y}_{k,i}(P_0) + e_{k,i}, \enskip k = 1,\dots,n_i, \enskip i = 1,\dots,m,
\end{align*}
where $m$ is the number of drinking episodes, $\bar{y}_{k,i}(P_0)$ is the observed aggregate TAC for the $i^{th}$ drinking episode at time step $k$, given the ``true" distribution $P_0$, the product of the cdfs of two independent $Beta(2,5)$ distributions, and $e_{k,i}$ is the random error, i.i.d. random variables from Normal distribution with mean $0$ and variance $1 \times 10^{-6}$, i.e. $N(0,1 \times 10^{-6}), \enskip k = 1,\dots,n_i, \enskip i = 1,\dots,m$.

The numerical scheme used to approximate the PDE model for the TAC observations is a linear spline-based Galerkin approximation method with $N+1$ equally spaced nodes ($N$ equal sub-intervals of $[0,1]$) (see \cite{Sirlanci:2019(1),Sirlanci:2017,Sirlanci:2019(2)}). Let $\bar{y}_{k,i}^N(P)$ be the approximation of $\bar{y}_{k,i}(P)$ using the Galerkin method, where $N$ denotes the level of the spatial discretization. And, let $\mathscr{y}_{k,i}$ be the realizations of the random variables $Y_{k,i}$. Let $n=\max n_i$, and let $y_{k,i} = 0$ for $n_i < k \leq n, \enskip i=1,\dots,m$, depicting the level of alcohol staying at $0$ after reaching $0$. We want to compute
\begin{align*}
    \hat{P}_{n,m,M}^N = \arg \min_{P \in \mathcal{P}_M(Q)} J_{n,m}^N(\pmb{\mathscr{y}},P) = \arg \min_{P \in \mathcal{P}_M(Q)} \sum_{i=1}^m \sum_{k=1}^{n} (\mathscr{y}_{k,i} - \bar{y}^N_{k,i}(P))^2,
\end{align*}
where $\pmb{\mathscr{y}}= (\{\mathscr{y}_{k,i}\}_{k=1}^{n_i})_{i=1}^{m}$. And, $\mathcal{P}_M(Q)$, an approximation to $\mathcal{P}(Q)$, is the computationally tractable set given by
\begin{align*}
    \mathcal{P}_M(Q) = \{P \in \tilde{\mathcal{P}}_d(Q) \enskip | \enskip P=\sum_{j=1}^{M} p_j \delta_{\bm{q}_j}, \bm{q}_j \in \{\bm{q}_j\}_{j=1}^{M}\},
\end{align*}
where
\begin{align*}
    \tilde{\mathcal{P}}_d(Q) = \{P \in \mathcal{P}(Q) \enskip | \enskip P=\sum_{j=1}^{M} p_j \delta_{\bm{q}_j}, \bm{q}_j \in Q_d, M \in \mathbb{N}, p_j \in [0,1] \cap \mathbb{Q}, \sum_{j=1}^{M} p_j =1\},
\end{align*}
the collection of all convex combinations of Dirac measures on $Q$ with nodes $\bm{q}_j \in Q_d$, where $Q_d = \{\bm{q}_j\}_{j=1}^{\infty}$, $M$ is the total number of nodes, and $p_j$ is the rational weight at the node $\bm{q}_j$. In the alcohol problem, since the parameter vector $\bm{q}$ is two-dimensional, we fix the nodes $\bm{q}_j = (q_{j_1},q_{j_2})$ as $M$ uniform meshgrid coordinates on $[0,1] \times [0,1]$.

From the results of the previous sections, we have
\begin{align*}
    \hat{P}_{n,m,M}^N = \arg \min_{\tilde{\bm{p}} \in \widetilde{\mathbb{R}^M}} \big{|}\big{|} \pmb{\mathscr{y}}- \bm{H} \bm{D}_{\tilde{\bm{p}}}\big{|}\big{|}^2_F,
\end{align*}
where $\pmb{\mathscr{y}}$ is the $n \times m$ matrix of the realizations $(\{\mathscr{y}_{k,i}\}_{k=1}^{n})_{i=1}^{m}$, $\bm{q}_j = (q_{j_1},q_{j_2})$ are the $M$ nodes of the uniform meshgrid, and $\tilde{\bm{p}} = (p_1,\dots,p_M) \in \widetilde{\mathbb{R}^M} = \{\tilde{\bm{p}} \enskip | \enskip p_j \in \mathbb{R}^+ , \sum_{j=1}^M p_j = 1\}$, where $p_j$ is the rational weight at the node $\bm{q}_j$. The $n \times m M$ matrix $\bm{H}$ is given by $$\bm{H} = \bigg{[}(\{h_k(x_{k,i}^N, \phi_{0,i}^N, u_{k,i}; \bm{q}_1)\}_{k=1}^{n})_{i=1}^{m} \enskip \dots \enskip (\{h_k(x_{k,i}^N, \phi_{0,i}^N, u_{k,i}; \bm{q}_M)\}_{k=1}^{n})_{i=1}^{m} \bigg{]},$$ where $h_k(x_{k,i}^N, \phi_{0,i}^N, u_{k,i}; \bm{q}_j)$ is the TAC evaluated at the node $\bm{q}_j = (q_{j_1},q_{j_2})$ and $\bm{D}_{\tilde{\bm{p}}}$ is $m M \times m$ matrix consisting of diagonal sub-matrices as follows, $$\bm{D}_{\tilde{\bm{p}}} = \bigg{[} diag(p_1) \enskip diag(p_2) \enskip \dots \enskip diag(p_M) \bigg{]} ^T.$$
We let $\phi_{0,i}^N = 0$, since we assume that there is no alcohol in the epidermal layer of the skin at time $t=0$. The resulting constrained optimization problem is solved using the routine FMINCON in MATLAB.

In order to prevent the estimation from extreme oscillations due to being ill-posed, we add a regularization term to our optimization problem. So we have
\begin{align*}
    \hat{\hat{P}}_{n,m,M}^N = \arg \min_{\tilde{\bm{p}} \in \widetilde{\mathbb{R}^M}} \big{|}\big{|} \pmb{\mathscr{y}}- \bm{H} \bm{D}_{\tilde{\bm{p}}}\big{|}\big{|}^2_F + w_1 \sum_{j_1,j_2} \big{|} p_{j_1+1,j_2} - p_{j_1,j_2} \big{|}^2 + w_2 \sum_{j_1,j_2} \big{|} p_{j_1,j_2+1} - p_{j_1,j_2} \big{|}^2,
\end{align*}
where $w_1$ and $w_2$ are the regularization parameters. As $w_1$ and $w_2$ increase, the regularization terms plays a stronger role. The critical step is to choose the regularization parameters, $w_1$ and $w_2$, properly, which is a combination of art and science.

Using MATLAB, we first simulated the TAC data which represents data that might be collected by the biosensor for an individual's drinking episode. To make our simulated TAC data as realistic as possible, we used BrAC data from three drinking episodes collected in the Luczak laboratory as the input to our model. Using these inputs, the simulated aggregate TAC is generated by first sampling $100$ TAC observations as longitudinal vectors through generating random samples of $q_1$ and $q_2$, i.i.d. random variables from a $Beta(2,5)$ distribution, then averaging them at each time point, and then adding noise.

After computing the simulated TAC outputs, we used our scheme, and numerically solved the resulting optimization problem to obtain an estimate for the cdf. To show the convergence of the estimated cdf of the parameter vector $\bm{q}=(q_1,q_2)$ to the ``true" distribution, which in the simulation case is the product of two $Beta(2,5)$ cdfs, we use the MCMC Metropolis Algorithm to generate random samples from the estimated distribution, and performed a generalized (2-dimensional) two-sample Kolmogorov-Smirnov test (KS-test). The null hypothesis was that the generated random samples from the estimated distribution and those from the ``true" distribution were drawn from the same distribution.

For the MCMC method, we used cubic spline interpolation to increase the resolution to a near-continuous distribution and used the Metropolis Algorithm with the standard uniform as the proposal distribution. For the Kolmogorov-Smirnov test, we used the Matlab function $kstest\rule{0.15cm}{0.15mm}2s\rule{0.15cm}{0.15mm}2d$ as described in \cite{Muir:2020}. The two-dimensional version of the KS-test used in this algorithm was developed by Peacock \cite{Peacock:1983}. The statistic, $D_{\hat{n}}$, used in the KS-test is the maximum of the absolute difference between the two empirical distributions of random samples generated by the estimated distribution and the ``true" distribution by considering all the ordering combinations that are possible.  In \cite{Peacock:1983}, Peacock argues that the two dimensional case may be analyzed in the same manner as the one dimensional case where the sample size is replaced by $$\hat{n} = \frac{n_1 n_2}{n_1+n_2},$$ where $n_1$  is the size of the sample generated by the estimated distribution, and $n_2$ is the size of the sample generated by the ``true" distribution. For this algorithm, $n_1$ and $n_2$ should be greater than $10$. Letting $Z_{\hat{n}} = \sqrt{\hat{n}}D_{\hat{n}}$, for large values of $n_1$ and $n_2$, $Z_{\hat{n}} \sim K$, where $K$ is a Kolmogorov random variable whose distribution is given by 
\begin{align*}
    F_K(z) = P(K \leq z) = 1 - 2\sum_{k=1}^\infty (-1)^{k-1}e^{-2k^2z^2}.
\end{align*}
 In \cite{Birnbaum:1952}, Birnbaum derived analytic expressions for the distribution of Kolmogorov's statistic for finite sample size, $ 2 \text{ and } 3$, and tabulated values for higher sample sizes. Using Monte Carlo simulation, Peacock demonstrated that one may adjust for small sample sizes via the fit asymptotic correction $$1 - \frac{Z_{\hat{n}}}{Z_{\infty}} = 0.53 \hat{n}^{-0.9},$$  Taking the null hypothesis to be that the two samples were drawn from the same distribution, the $p$-value is then well approximated by the expression $$P(K>Z_{\infty}) \approx 2 e^{-2 (Z_{\infty} - 0.5)^2},$$ the asymptotic distribution in the two-dimensional case (See details in \cite{Peacock:1983}).

We consider the case with $M=400$ and $N=128$ with regularization parameters $w_1=2 \times 10^{-3}$, and $w_2= 5 \times 10^{-5}$. In Figure (\ref{Fig.2}), we can see three different views of the estimated cdf and the ``true" cdf, which we recall is the product of two $Beta(2,5)$ cdfs.
\begin{figure}[H]
\centering
\includegraphics[width=4cm ,height= 3.2cm]{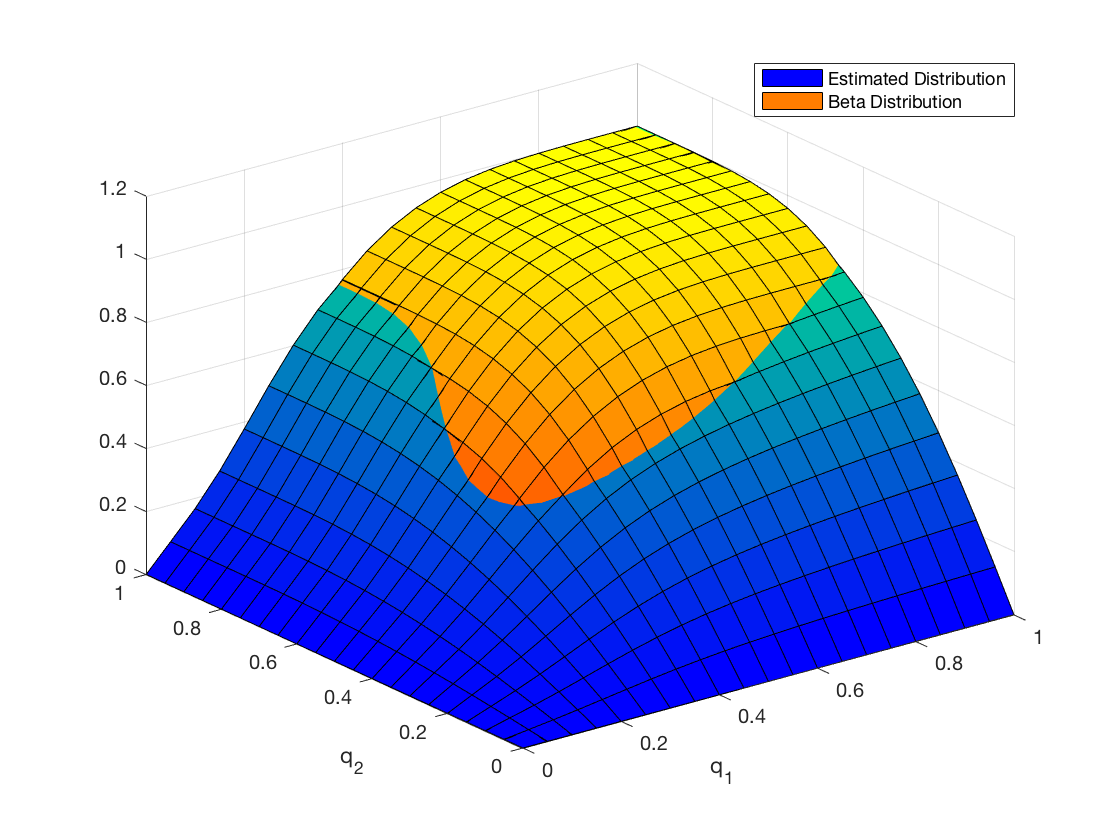}
\includegraphics[width=4cm ,height= 3.2cm]{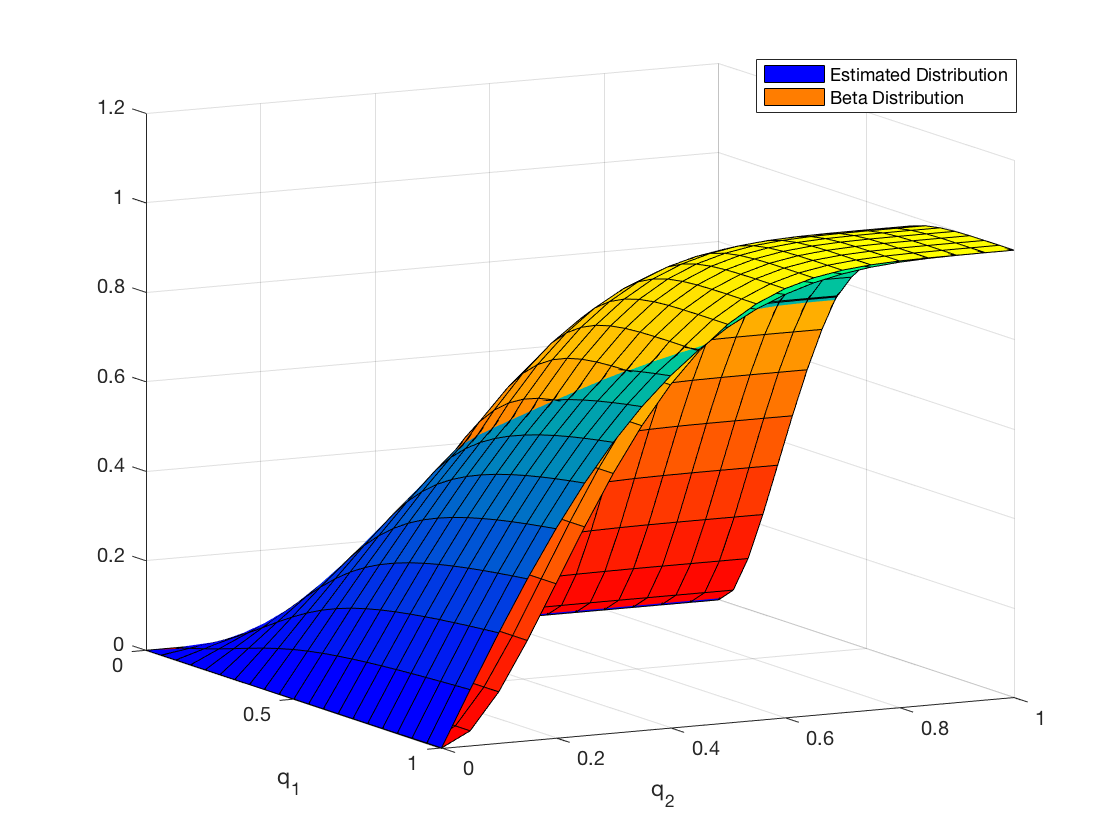}
\includegraphics[width=4cm ,height= 3.2cm]{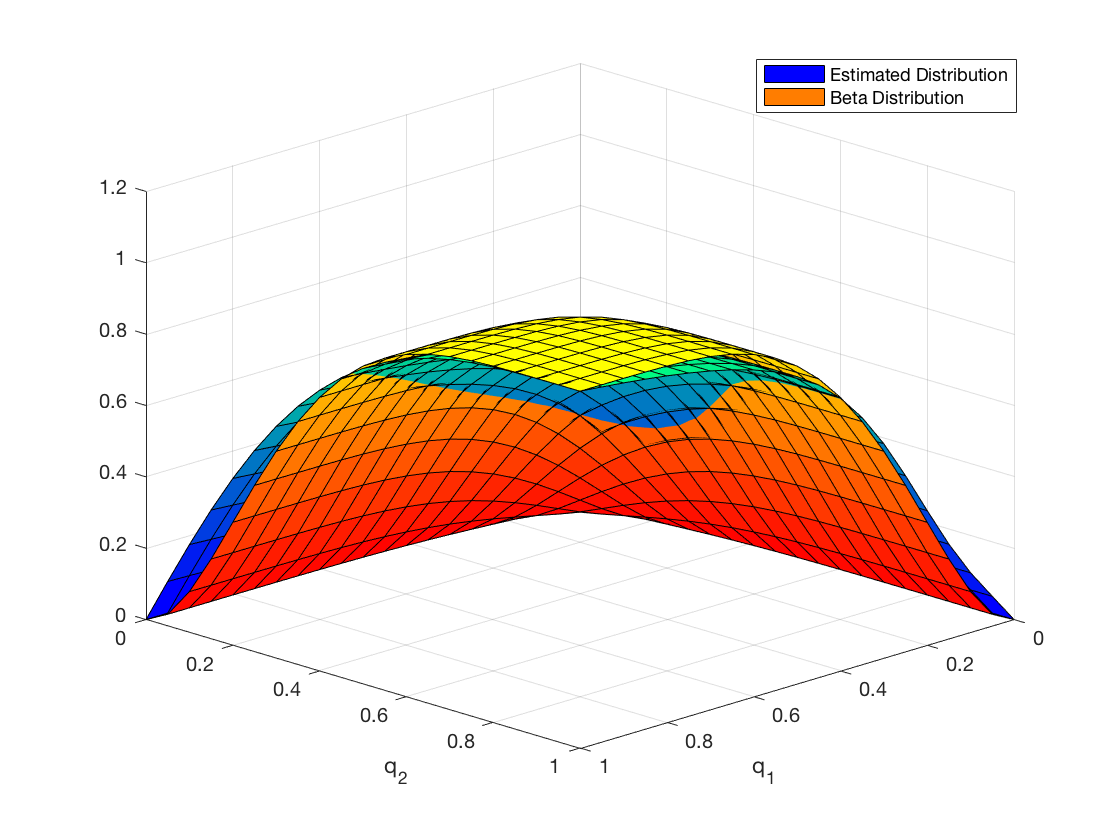}
\caption{Different views of the estimated distribution and the ``true" joint $Beta(2,5)$ distribution for $M=400$ and $N=128$ with regularization term}
\label{Fig.2}
\end{figure}
We can observe that our estimated distribution is reasonably ``close" to the ``true" distribution, which agrees with the averaged $p$-value of our hypothesis test in Table (\ref{Table.1}). To calculate the averaged $p$-value, we generated 500 samples from the estimated distribution using the MCMC algorithm and 500 samples from the ``true" distribution, and applied the two-dimensional Kolmogorov-Smirnov test. We repeated this $100$ times, and take the average of all the $p$-values that we get each time. The averaged $p$-value in Table (\ref{Table.1}) indicates that it is reasonable to not reject the null hypothesis, which states that the samples from the estimated distribution and the ``true" distribution are drawn from the same distribution.
\begin{table}[H]
\centering
    \caption{The averaged $p$-value for $M=400$ and $N=128$ with regularization parameters $w_1=2 \times 10^{-3}$, and $w_2= 5 \times 10^{-5}$}
    \label{Table.1}
    \begin{tabular}{lll}
    \hline\noalign{\smallskip}
    {$M$} & {$N$} & {Averaged $p$-value} \\
    \noalign{\smallskip}\hline\noalign{\smallskip}
    400 & 128 & 0.0586 \\
    \noalign{\smallskip}\hline
    \end{tabular}
\end{table}
Figure (\ref{Fig.3}) shows the histograms of the 50000 generated $q_1$ and $q_2$ samples. In this simulation case, since we assume that the parameters $q_1$ and $q_2$ are i.i.d. random variables from a $Beta(2,5)$ distribution, we compare the histograms of each parameter with the ``true" probability density function of $Beta(2,5)$ in one dimension.
\begin{figure}[H]
\centering
\includegraphics[width=4.6cm ,height= 3.32cm]{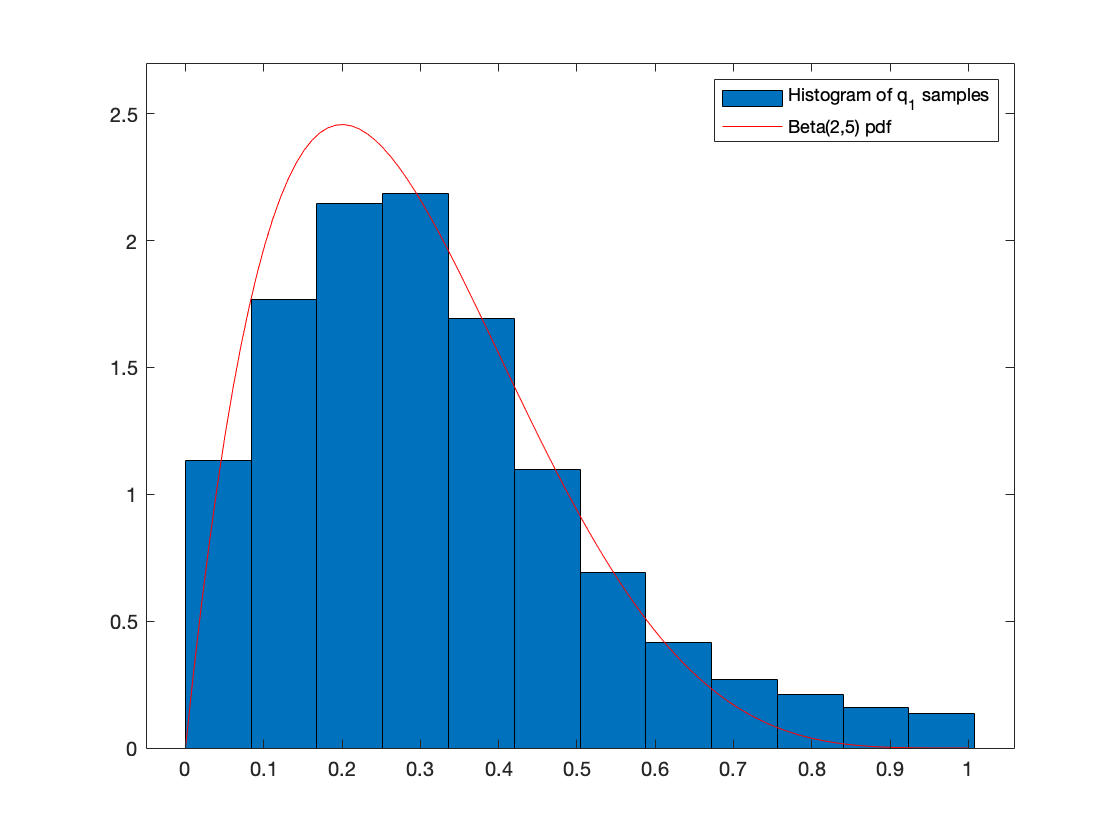}
\includegraphics[width=4.6cm ,height= 3.32cm]{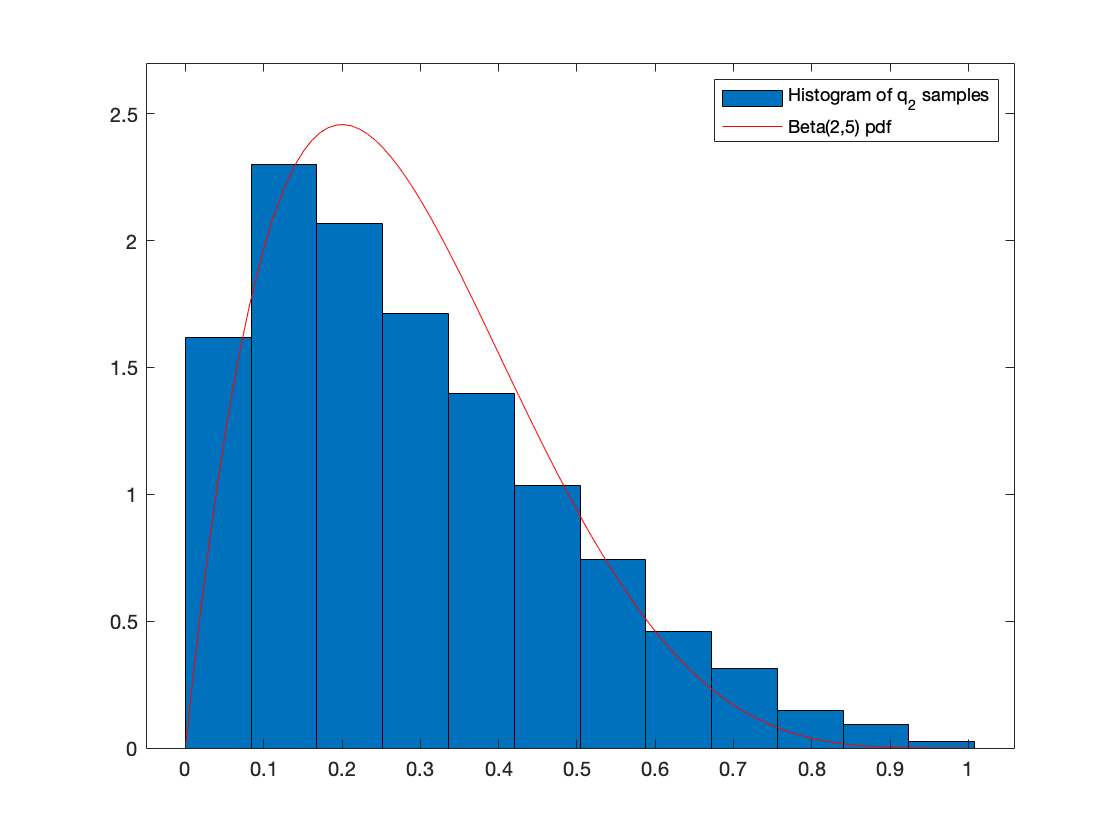}
\caption{Histogram of the generated $q_1$ samples (left) and $q_2$ samples (right) plus the ``true" probability density function of $Beta(2,5)$ (red)}
\label{Fig.3}
\end{figure}
Since these numerical results are based on the simulated data, we considered the ``true" distribution to be a simple case by assuming $q_1$ and $q_2$ i.i.d. random variables from a $Beta(2,5)$ distribution. The particular choice of this distribution is the scatter plot of samples of $q_1$ and $q_2$ for a particular set of drinking episode data given in \cite{Banks:2018}. In that paper, the parameters were obtained deterministically for BrAC and TAC measurements from 18 drinking episodes of different individuals. However, the choice of this distribution was mainly for demonstration purpose. We tested the algorithm on other distributions as well, and we obtained similar results. Next, we will apply the algorithm to actual human subject data collected in the Luczak laboratory. The nonparametric estimation method presented in this paper gives us the flexibility of not assuming any particular distribution type for the parameter vector $\bm{q}=(q_1,q_2)$, which in turn gives us the flexibility of not having to assume that $q_1$ and $q_2$ are i.i.d. when we are working with actual clinical or experimental data. By estimating the distribution of the parameter vector $\bm{q}=(q_1,q_2)$, we will then be able to sample from this distribution.

\subsection{Example 2: Estimation Based on Human Subjects Data}
\label{sec:6.2}

We considered two different datasets, \cite{Luczak:2015,Saldich:2020}, one collected with WrisTAS$^{\text{TM}}$7 biosensors and the other collected with SCRAM CAM$^{\textregistered}$ biosensors. For each dataset, we take $m = 9$ drinking episodes from different individuals. Applying the leave-one-out cross-validation (LOOCV) method, we partition the drinking episodes into the training set, which includes $8$ drinking episodes, and the test set, which includes $1$ drinking episode; we repeat this $9$ times. Each time, we get an estimation for the distribution of the parameter vector $\bm{q} = (q_1,q_2)$ using the training set. We generate $100$ random samples of $\bm{q}=(q_1,q_2)$ from the estimated distribution. Then we simulate $100$ TAC signals, using the BrAC input of the test set and the $100$ random samples of $\bm{q}$. We then compute the average of the TAC signals as an estimate of the ``true" TAC.

In order to estimate the accuracy of our model, we compute the normalized root-mean-square error (NRMSE) using the estimated TAC and the measured TAC, given the model complexity based on the number of nodes $M$ and the level of discretization $N$. In each round of the LOOCV method, since the comparison within the drinking episodes of the test sets between the measured TAC and the estimated TAC are in different scales for each test set, we use the following normalized root-mean-square error (NRMSE) for a means of comparison,
\begin{align*}
    \text{NRMSE}_{i} &= \frac{\text{RMSE}_{i}}{\max\limits_{k} \mathscr{y}_{k,i} - \min\limits_{k} \mathscr{y}_{k,i}},
\end{align*}
\noindent where
\begin{align*}
    \text{RMSE}_{i} &= \sqrt{\frac{1}{n} \sum_{k=1}^{n} (\mathscr{y}_{k,i} - \hat{\mathscr{y}}_{k,i})^2},
\end{align*}
\noindent with $\mathscr{y}_{k,i}$ the measured TAC for $i^{th}$ drinking episode at time step $k$, and $\hat{\mathscr{y}}_{k,i}$ the estimated TAC.

We compare various model complexities each having a different number of nodes $M$, and a different level of discretization $N$ by calculating the NRMSE$_{mean}$ for each model complexity,
\begin{align*}
    \text{NRMSE}_{mean} = \frac{1}{m} \sum_{i=1}^m \text{NRMSE}_{i}.
\end{align*}

For the first example, we consider the data collected using the WrisTAS7 alcohol biosensor. In Table (\ref{Table.2}), we include different model complexities with a different number of nodes $M$ and a different level of discretization $N$. We see that as the number of nodes $M$ and the level of discretization $N$ increase, the NRMSE$_{mean}$ decreases.
    \begin{table}[H]
    \centering
    \caption{Decrease in NRMSE$_{mean}$ for an increasing number of nodes $M$ and an increasing level of discretization $N$ for the dataset collected using the WrisTAS7 alcohol biosensor}
    \label{Table.2}
    \begin{tabular}{llll}
    \hline\noalign{\smallskip}
    {Model Complexity} & {$M$} & {$N$} & {NRMSE$_{mean}$} \\
    \noalign{\smallskip}\hline\noalign{\smallskip}
    1 & 4 & 2 & 0.8214\\
    2 & 9 & 2 & 0.4394\\
    3 & 16 & 4 & 0.2357\\
    4 & 25 & 4 & 0.1805\\
    5 & 36 & 8 & 0.1404\\
    6 & 49 & 16 & 0.1261\\
    7 & 64 & 32 & 0.1259\\
    8 & 81 & 128 & 0.1206\\
    9 & 225 & 128 & 0.1190\\
    10 & 400 & 128 & 0.1140\\
    \noalign{\smallskip}\hline
    \end{tabular}
    \end{table}
In Figure (\ref{Fig.4}), a plot with a y-axis on each side, we can see the decrease in NRMSE$_{mean}$ as the number of nodes $M$ and the level of discretization $N$ increase, referring to the model complexities in Table (\ref{Table.2}). We also see from the boxplots of NRMSE$_{i}$, $i=1,\dots,9$ for each model that the variance of NRMSE$_{i}$ also decreases with the increase in $M$ and $N$. Looking at the trend, we can see that the NRMSE$_{mean}$ does not decrease much after a certain point. Consequently, in practice, the choice of the number of nodes $M$ and the level of discretization $N$ presents a trade-off between the NRMSE$_{mean}$ and the computational cost.
\begin{figure}[H]
\centering
\includegraphics[width=6.7cm ,height= 5.66cm]{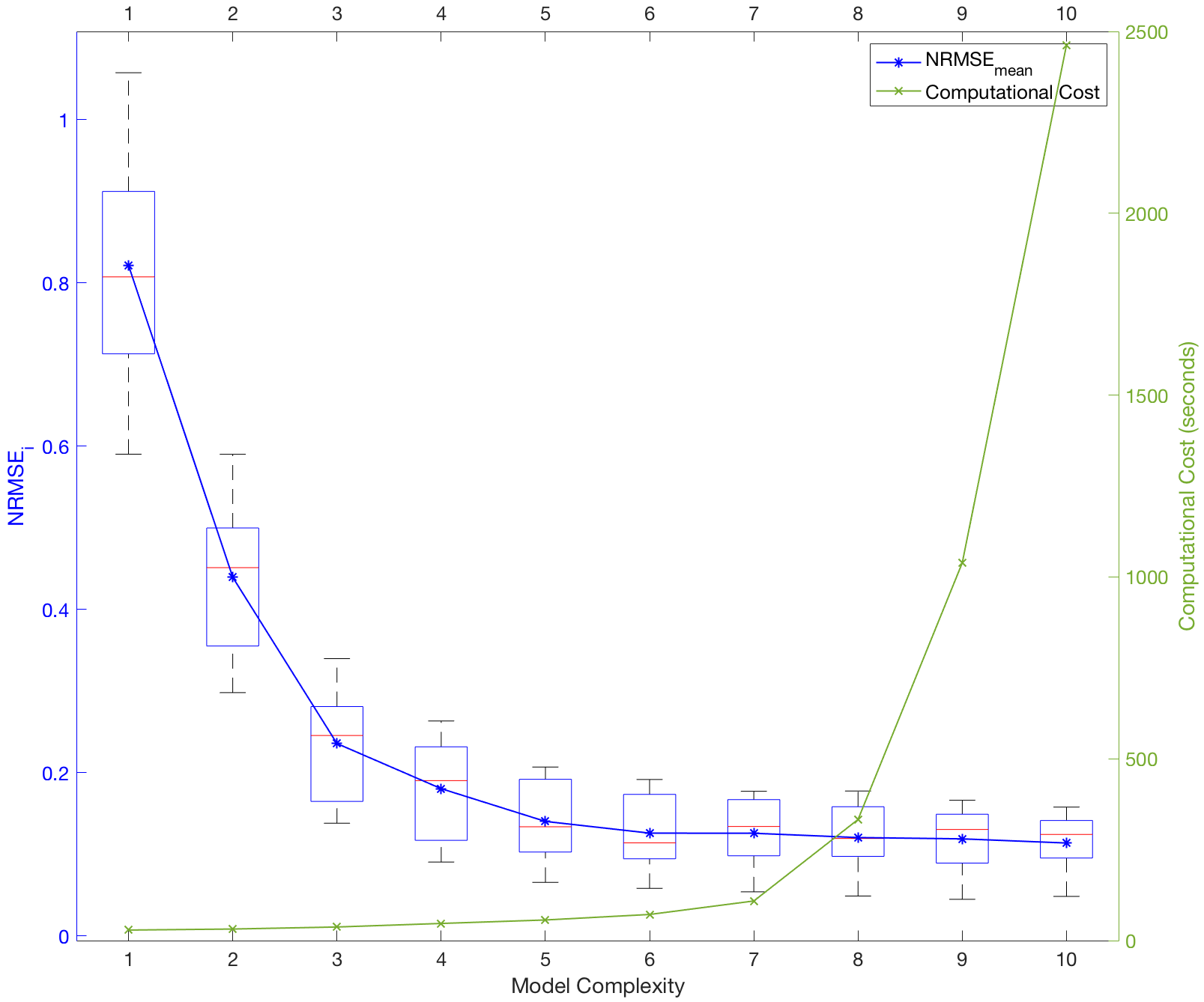}
\caption{Boxplot of NRMSE$_{i}$ given the complexity of the model as labeled in Table (\ref{Table.2}). A decrease in NRMSE$_{mean}$ as $M$ and $N$ increase (blue), and an increase in the computational cost in seconds (green) is observed for the dataset collected using the WrisTAS7 alcohol biosensor}
\label{Fig.4}
\end{figure}
Figure (\ref{Fig.5}) shows the measured TAC and the estimated TAC for the test set using the LOOCV, and the $95 \%$ simultaneous confidence band for the model complexity fixed with $M=400$ and $N=128$. We have an overall $1-\alpha$ confidence level with $\alpha = 0.05$ (i.e. $95 \%$ simultaneous confidence level) by using $1-\frac{\alpha}{n}$ confidence level at each time step according to the Bonferroni correction. The confidence interval at each time step is $ \Bar{x}_k \pm t_{99,\frac{\alpha}{2n}} \frac{s_k}{\sqrt{100}}$, where $\Bar{x}_k$ is the mean of the 100 sampled TACs at time step $k$, $s_k$ is the standard deviation of the 100 sampled TACs at time step $k$, and $t_{99,\frac{\alpha}{2n}}$ is the upper $\frac{\alpha}{2n}$ quantile of the $t$ distribution with $99$ degrees of freedom.
\begin{figure}[H]
\centering
\includegraphics[width=12.2cm ,height= 8.1cm]{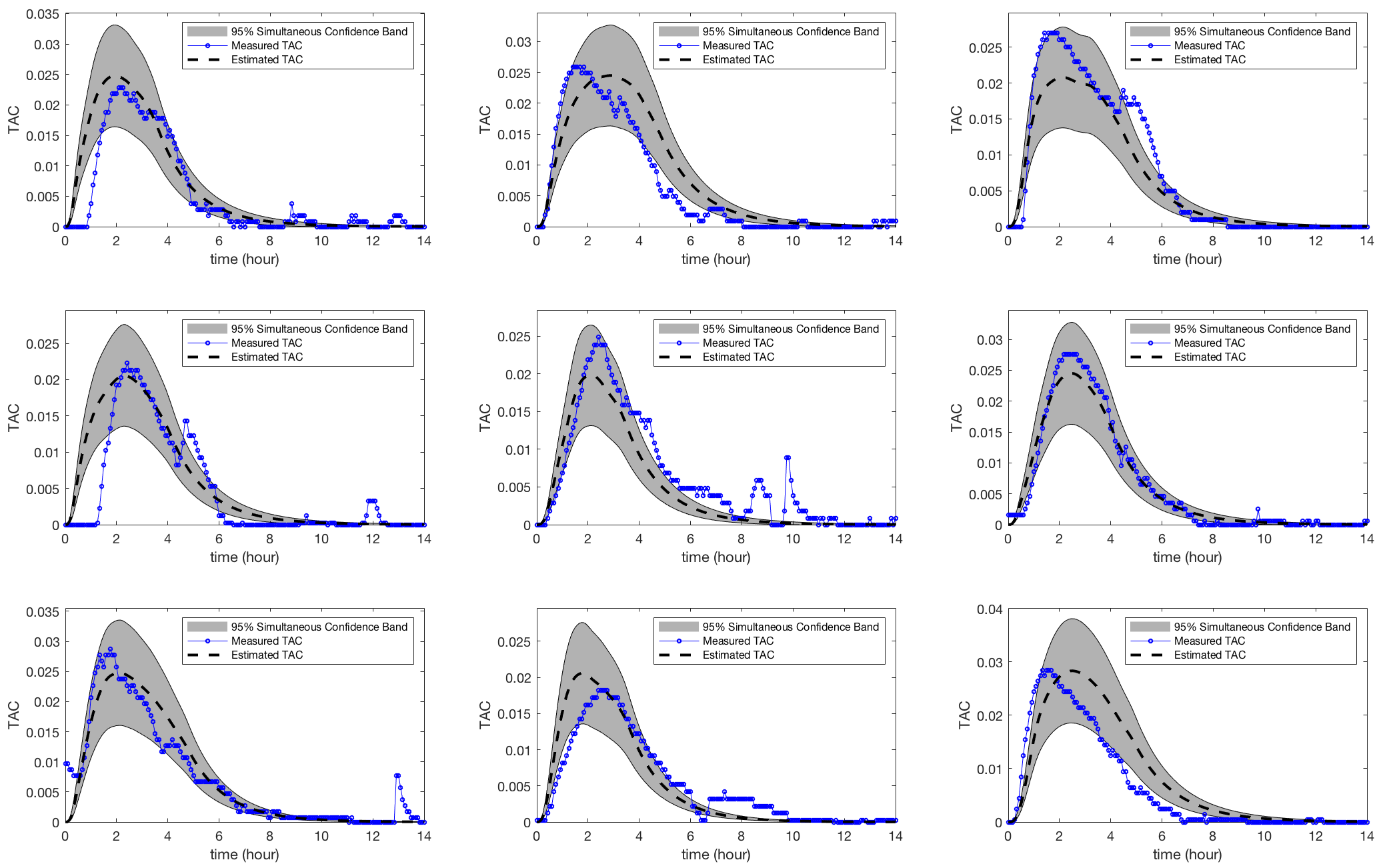}
\caption{The measured TAC, the estimated TAC, and the $95 \%$ simultaneous confidence band for 9 drinking episodes from the test set collected using the WrisTAS7 alcohol biosensor using the LOOCV}
\label{Fig.5}
\end{figure}
For the second example, we consider the dataset collected using the SCRAM alcohol biosensor. We can see from Table (\ref{Table.3}), and Figures (\ref{Fig.6}) and (\ref{Fig.7}) that we get a similar result.

    \begin{table}[H]
    \centering
    \caption{Decrease in NRMSE$_{mean}$ for an increasing number of nodes $M$ and an increasing level of discretization $N$ for the dataset collected using the SCRAM alcohol biosensor}
    \label{Table.3}
    \begin{tabular}{llll}
    \hline\noalign{\smallskip}
    {Model Complexity} & {$M$} & {$N$} & {NRMSE$_{mean}$} \\
    \noalign{\smallskip}\hline\noalign{\smallskip}
    1 & 4 & 2 & 0.5120\\
    2 & 9 & 2 & 0.2544\\
    3 & 16 & 4 & 0.1870\\
    4 & 25 & 4 & 0.1527\\
    5 & 36 & 8 & 0.1452\\
    6 & 49 & 16 & 0.1437\\
    7 & 64 & 32 & 0.1304\\
    8 & 81 & 128 & 0.1267\\
    9 & 225 & 128 & 0.1248\\
    10 & 400 & 128 & 0.1235\\
    \noalign{\smallskip}\hline
    \end{tabular}
    \end{table}
    
\begin{figure}[H]
\centering
\includegraphics[width=6.7cm ,height= 5.66cm]{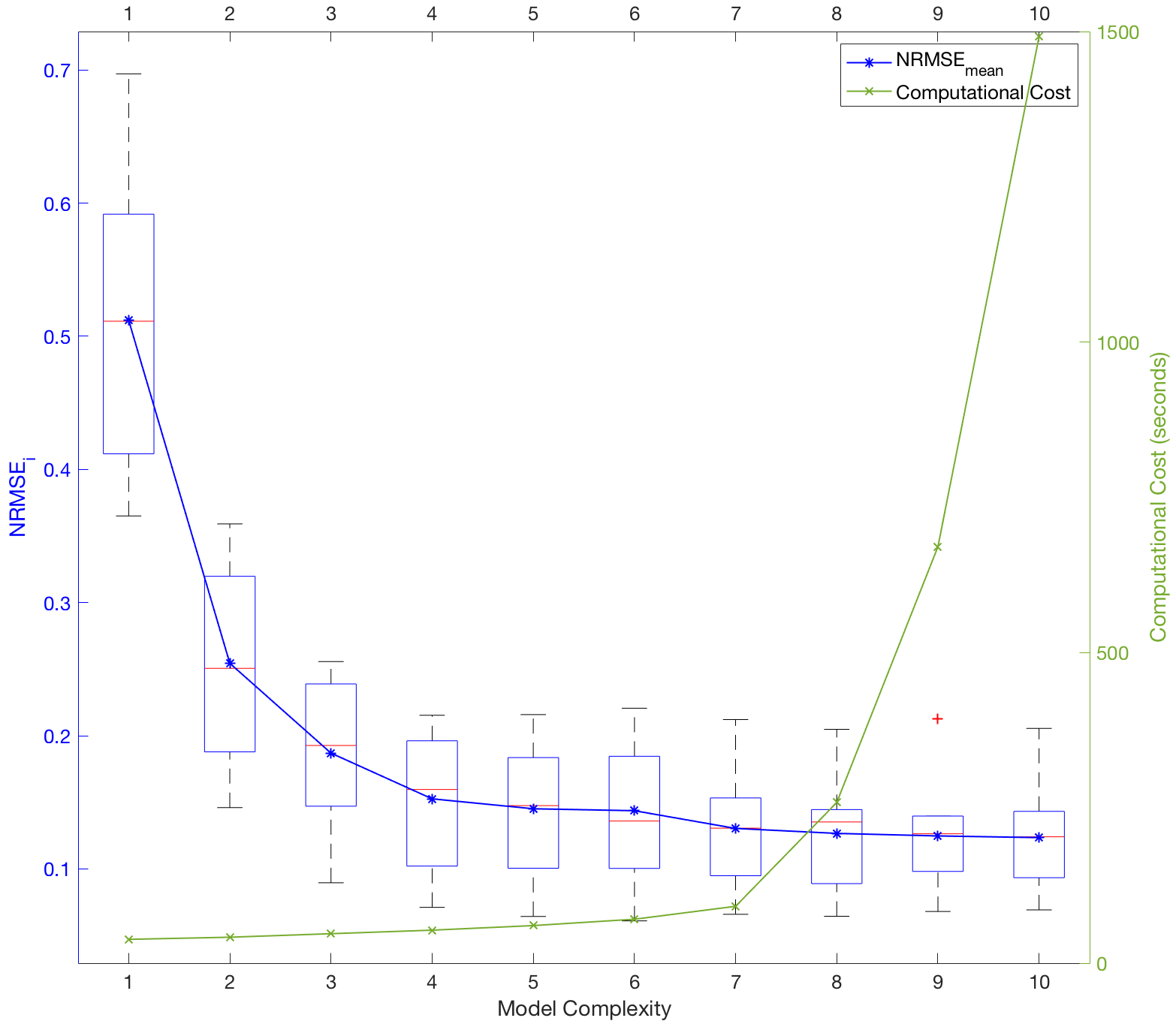}
\caption{Boxplot of NRMSE$_{i}$ given the complexity of the model as labeled in Table (\ref{Table.3}). A decrease in NRMSE$_{mean}$ as $M$ and $N$ increase (blue), and an increase in the computational cost in seconds (green) is observed for the dataset collected using the SCRAM alcohol biosensor}
\label{Fig.6}
\end{figure}

\begin{figure}[H]
\centering
\includegraphics[width=12.2cm ,height= 8.1cm]{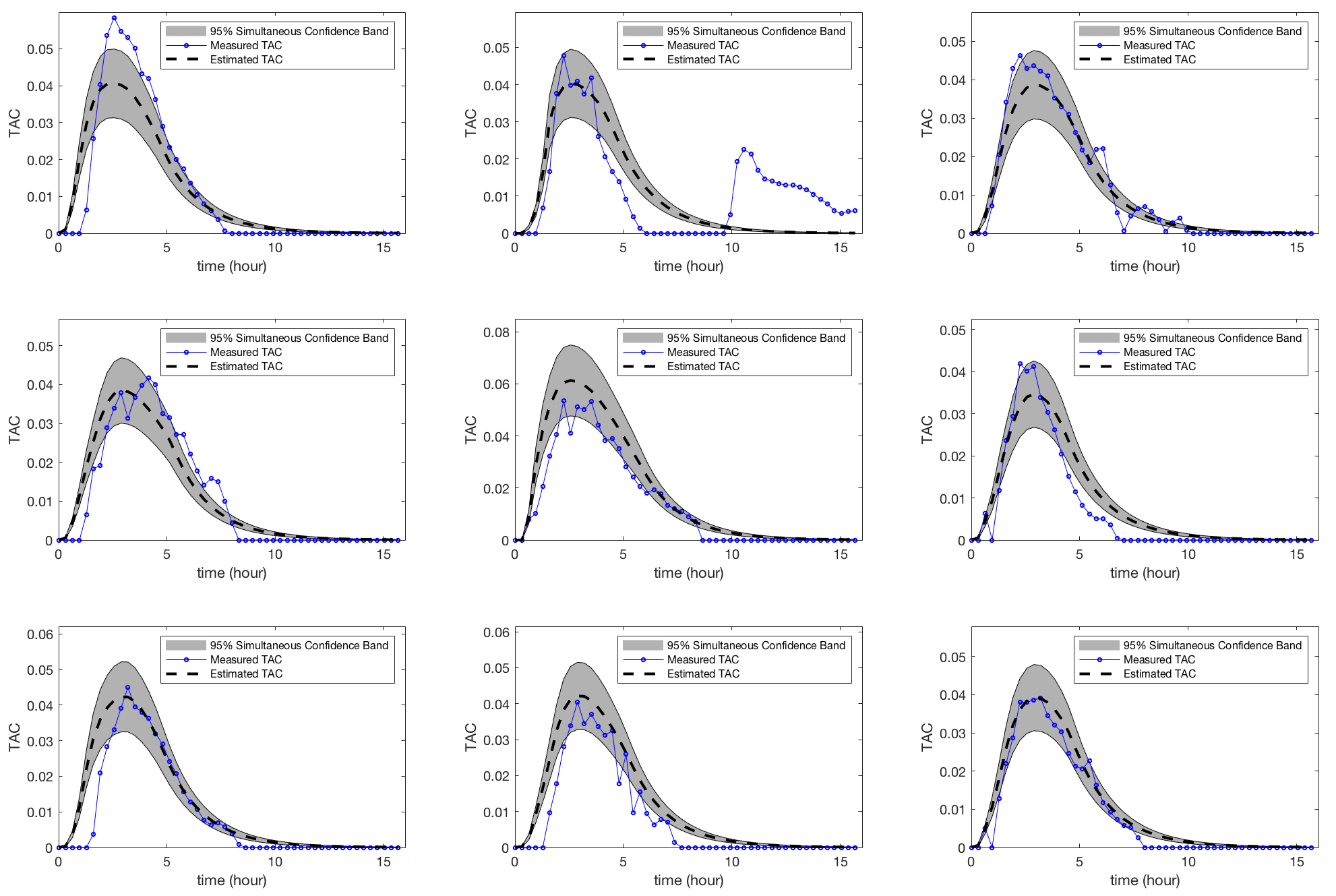}
\caption{The measured TAC, the estimated TAC, and the $95 \%$ simultaneous confidence band for 9 drinking episodes from the test set collected using the SCRAM alcohol biosensor using the LOOCV}
\label{Fig.7}
\end{figure}

\section{Discussion and Concluding Remarks}
\label{sec:7}

Our work here was motivated by a problem involving the development of a data analysis system for a transdermal alcohol biosensor. We assumed that each data point is an observation of the mean behavior plus a random error. Our results in section \ref{sec:6} demonstrated the efficacy of our approach in the case of both simulated data and human subjects data. In addition, in section \ref{sec:6.2}, we showed that our approach works for two different alcohol biosensors, the WrisTAS7 and the SCRAM devices.

In this study, we considered a nonparametric approach (i.e. not assuming any distribution form) to estimate the probability distribution of a random parameter vector in discrete-time (in general, infinite-dimensional) dynamical systems based on aggregate data from multiple subjects. Our results are related to, and in some sense an extension of, the Prohorov metric framework developed earlier in \cite{Banks:2012}. We obtained existence and consistency results for our estimator, and proved a convergence result for a finite-dimensional approximation framework for the case where the underlying dynamical system is infinite-dimensional and not directly amenable to numerical computations.  We provided a rather complete treatment of how our framework can be applied in the case of abstract parabolic systems. Our ultimate goal is to develop a generalized framework that is applicable to different types of dynamical systems whether finite-dimensional or infinite-dimensional and whether discrete-time or continuous-time. This generalization will extend our framework to apply to any type of dynamical system such as ordinary differential equations (ODE), partial differential equations (PDE), functional differential equations (FDE), or difference equations (DE).

Looking ahead, our primary motivation in pursuing this research is our interest in solving optimization and control problems involving random dynamical systems. Once we have estimated the distribution of the random parameters in the underlying dynamical system, which serves as the forward model, and introduced a population model, the objective then advances to estimating the input to the system based on the observation of the output for an individual subject (that is estimating BAC or BrAC from the observed TAC). In addition to providing an estimate of the input, the estimated distribution of the random parameters in the forward population model can then be used to obtain credible bands for the estimated input. The next step is currently ongoing.

We are also working on a similar approach to the one taken here that utilizes a maximum likelihood based mixed effects statistical model \cite{Davidian:2003,Stuart:2010} in place of the naive pooled data statistical model that formed the basis for the estimator developed in this study. The results we obtained and presented in this paper are based on the assumption that aggregate longitudinal data is what is available. However, if we assume that specific longitudinal data is available for each drinking episode, then a mixed effects model would be more appropriate. Finally, we are also investigating a Bayesian approach that yields posterior distributions for the random parameters in the underlying dynamical system wherein the prior distribution serves as regularization. Using the Bayesian approach, we can obtain credible bands, and we can compare the numerical results from the Bayesian approach to the results established here.


%
%

\begin{acknowledgements}
We thank the Luczak laboratory students and staff members, particularly Emily Saldich, for their assistance with data collection and management for the SCRAM biosensor. We also thank Dr. Tamara Wall for providing the data for the WrisTAS7 biosensor. 
\end{acknowledgements}
\section{Declarations}
\textbf{Funding:} This study was funded in part by the National Institute on Alcohol Abuse and Alcoholism (Grant Numbers: R21AA017711 and R01AA026368, S.E.L. and I.G.R.) and by support from the USC Women in Science and Engineering (WiSE) program (L.A.).\\ 
%
%
\textbf{Conflict of Interest:} The authors declare that they have no conflicts of interest.\\
\textbf{Availability of Data and Material:} The data used in this study can be made available upon special request to the authors.\\
\textbf{Code Availability:} The codes used in this study can be made available upon special request to the authors. 



\end{document}